\newtheorem{theorem}{Theorem}
\newtheorem{lemma}{Lemma}
\newtheorem{property}{Property}
\newtheorem{definition}{Definition}
\newtheorem{claim}{Claim}
\newcommand{\nchoosek}[2]{\left(\begin{array}{c}#1\\#2\end{array}\right)}
\newcommand{\comment}[1]{}
\Crefname{claim}{Claim}{Claims}
\Crefname{lemma}{Lemma}{Lemmas}
\Crefname{theorem}{Theorem}{Theorems}
\Crefname{property}{Property}{Properties}
\Crefname{definition}{Definition}{Definitions}
\newtheorem{exam}{Example$\!$}
\newenvironment{example}{\begin{exam}\hspace*{-1ex}{\bf }}{\end{exam}}
\newenvironment{proof-sketch}{\noindent \textit{Sketch of Proof:}}{$\blacksquare$}
\newtheorem{remrk}{Remark$\!$}
\newcommand{\cB}{{\cal B}}
\newcommand{\cC}{{\cal C}}
\newcommand{\cD}{{\cal D}}
\newcommand{\cF}{{\cal F}}
\newcommand{\cG}{{\cal G}}
\newcommand{\cS}{{\cal S}}
\newcommand{\cX}{{\cal X}}
\newcommand{\bfa}{{\boldsymbol a}}
\newcommand{\bfb}{{\boldsymbol b}}
\newcommand{\bfc}{{\boldsymbol c}}
\newcommand{\bfe}{{\boldsymbol e}}
\newcommand{\bfs}{{\boldsymbol s}}
\newcommand{\bfu}{{\boldsymbol u}}
\newcommand{\bfv}{{\boldsymbol v}}
\newcommand{\bfw}{{\boldsymbol w}}
\newcommand{\bfx}{{\boldsymbol x}}
\newcommand{\bfy}{{\boldsymbol y}}
\newcommand{\bfz}{{\boldsymbol z}}
\newcommand{\bfsigma}{{\boldsymbol{\sigma}}}
\newcommand{\wt}{\operatorname{wt}}
\begin{document}

\title{Reconciling Similar Sets of Data}

\author{
  \IEEEauthorblockN{
    Ryan~Gabrys\IEEEauthorrefmark{1},
    Farzad~Farnoud~(Hassanzadeh)\IEEEauthorrefmark{2}}\thanks{This paper was presented in part at the International Symposium on Information Theory 2015 in Hong Kong and at the 55th Annual Allerton Conference on Communication, Control, and Computing in 2017. \cite{conf1}, \cite{conf2}}\\
  {\normalsize
    \begin{tabular}{cc}
      \IEEEauthorrefmark{1}Spawar Systems Center San Diego~~ & \IEEEauthorrefmark{2}University of Virginia\\
           %La Jolla, CA 92093, U.S.A.~~~~~~ & La Jolla, CA 92093, U.S.A. \\
           ryan.gabrys@navy.mil & farzad@virginia.edu
    \end{tabular}}
    }

\maketitle

\begin{abstract} In this work, we consider the problem of synchronizing two sets of data where the size of the symmetric difference between the sets is small and, in addition, the elements in the symmetric difference are related through the Hamming distance metric. Upper and lower bounds are derived on the minimum amount of information exchange. Furthermore, explicit encoding and decoding algorithms are provided for many cases.
\end{abstract}
\noindent \textbf{Keywords.} Distributed databases, Coding theory

\section{Introduction}\label{sec:intro}

Suppose two hosts, A and B, each have a set of length-$n$ $q$-ary strings. Let $\cS^A$ denote the set of strings on Host $A$ and let $\cS^B$ denote the set of strings on Host $B$. The \textbf{\textit{set reconciliation}} problem is to determine the minimum information that must be sent from Host $A$ to Host $B$ with a single round of communication so that Host $B$ can compute their symmetric difference $\cS^A \bigtriangleup \cS^B= (\cS^A \setminus \cS^B) \cup (\cS^B \setminus \cS^A)$ where $|\cS^A \bigtriangleup \cS^B| \leq t$.

This problem has been the subject of study in many works such as \cite{MT02},  \cite{KLT03}, \cite{MTZ03}, \cite{EGUV11}, \cite{GL13}, and \cite{SR14}. The work in \cite{MT02} provides an approach to set reconciliation using polynomial interpolation. In \cite{KLT03} and \cite{MTZ03}, coding schemes were studied that were based upon error-correcting codes and polynomial interpolation. In \cite{EGUV11} and \cite{GL13} algorithms for set reconciliation were considered based upon Bloom filters. In \cite{SR14}, the authors consider the problem of synchronizing vector subspaces.

%In \cite{EGUV11} and \cite{GL13} algorithms for set reconciliation were considered based upon Bloom filters. 

 In this paper, we consider a variant of the traditional set reconciliation problem whereby the elements in the symmetric difference $\cS^A \bigtriangleup \cS^B$ are related. This setup could arise, for instance, when users are synchronizing files that are being edited or when the data elements themselves are interrelated. This paper focuses on the generic setup where the symmetric difference can be partitioned into subsets such that elements in each of these subsets are within a certain distance of each other. The focus in this work will be on transmission schemes that minimize the amount of information exchanged between two hosts.
 
%\added{In \cite{CKYYZ14}, the authors consider a problem which takes into account the similarities between objects; however, the problem studied is to minimize the earth movers distance between two sets given an upper bound on the communication cost.}
 
\begin{figure}
\begin{center}
\includegraphics[width=1\columnwidth]{./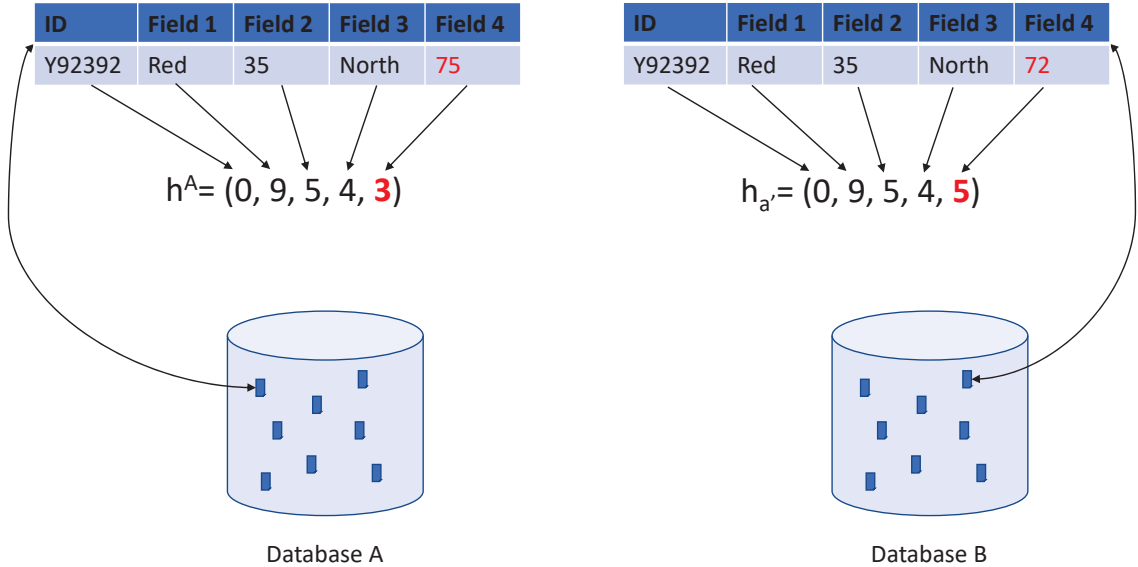}
\caption{Two Databases Synchronizing Related Sets of Information.}
\label{fig:dbexample}
\end{center}
\end{figure}

Specifically, the model studied in this work is motivated by the scenario where two hosts are storing a large number of (potentially large) documents. Under this setup, information is never deleted so that each database contains many different versions of the same document. Each document has a fixed number of fields and each field has a fixed size. When synchronizing sets of documents between two hosts, a set of hashes is produced. For every document, a single hash is formed by concatenating (in a systematic fashion) the result of hashing each field of the document. 

In Figure~\ref{fig:dbexample}, this setup is illustrated with the rectangles representing documents stored within a database. Each document consists of a unique ID along with $4$ additional name-value pairs. On the lefthand side of the diagram, we  show a document with the ID $Y92392$, which we will refer to as document $a$ for shorthand. Suppose $h^A = (0,9,5,4,3) \in \mathbb{Z}_{10}^5$ is the result of performing the hash described in the previous paragraph on document $a$. Suppose a single field on document $a$ is updated resulting in the document $a'$ (which is shown in Figure~\ref{fig:dbexample} as residing on Host $B$) and that $h_{a'}=(0,9,5,4,\textbf{5}) \in \mathbb{Z}_{10}^5$ is the hash for ${a'}$. By the previous discussion, $h^A$ and $h_{a'}$ differ only in the portion of $h_{a'}$ which corresponds to the field that was updated and the Hamming distance between $h^A$ and $h_{a'}$ is one. Motivated by this setup, we study the problem of reconciling sets of elements whereby subsets of elements in the symmetric difference are within a bounded Hamming distance from each other.

The contributions of this work include bounds and coding schemes for reconciling sets of related strings that reduce the information exchange. As will be discussed in more details, we derive transmission schemes that, under certain conditions, require less information exchange than existing, alternative methods. 

The paper is organized as follows. In \Cref{sec:model}, we formally define our problem and introduce some useful notation. Upper and lower bounds on the amount of required information exchange are provided in \Cref{sec:bounds}. In \Cref{sec:r1sets}, we provide a coding scheme for reconciling certain sets of related information.  In Section~\ref{sec:t>1}, we consider an extension of the ideas from \Cref{sec:r1sets} that can be used for reconciling more generic sets of related information. Section~\ref{sec:conclude} concludes the paper.
% are interested in designing codes for the setup where the elements in $\cS^A \bigtriangleup \cS^B$ are related. We first introduce some terminology.

\section{Model and Preliminaries}\label{sec:model}

For two strings $\bfx, \bfy \in \mathbb{F}_q^n$, let $d_H(\bfx, \bfy)$ denote their Hamming distance. We denote the Hamming weight of $\bfx$ as $\wt(\bfx)$. Throughout this paper, we assume $q$ is a power of $2$ and a constant.

\begin{definition}\label{def:thlset} Let $\cS^A \subseteq \mathbb{F}_q^n$ and $\cS^B \subseteq \mathbb{F}_q^n$. We say that $(\cS^A, \cS^B)$ are $(t, h, \ell)$-sets if $\cS^A \bigtriangleup \cS^B$ can be be written
\begin{align*}
 \cS^A \bigtriangleup \cS^B &=\bigcup_{i=1}^{j}\{ \bfx_{i,1}, \ldots, \bfx_{i,k_i} \}, 
\end{align*}
where
\begin{enumerate}
%\item for $1 \leq i \leq j$, $1 \leq k_i \leq h$,
\item $j \leq t$, 
\item for $1 \leq i \leq j$, $k_i \leq h$, and
%\item $|\cS^A \bigtriangleup \cS^B| = 2m \leq t$, and
\item for any $\bfu, \bfw \in \{\bfx_{i,1}, \ldots \bfx_{i,k_i} \}$, we have $d_H(\bfu, \bfw) \leq \ell$.
\end{enumerate}
\end{definition}
An illustration is given in Figure~\ref{fig:set-recon}, where $t=3$ and an example is provided next.
  
\begin{example} {Suppose $\cS^A, \cS^B \in \mathbb{F}_2^5$, where 
\begin{equation*}
\begin{split}
\cS^A&=\{(0,0,0,0,0), (1,0,1,1,1)\},\\
\cS^B&=\{(0,0,0,0,0), (1,1,0,0,1)\}. 
\end{split}
\end{equation*}
Then we say that $(\cS^A, \cS^B)$ are $(1,2,3)$-sets since 
\[\cS^A \bigtriangleup \cS^B=\{ (1,\textcolor{red}{0},\textcolor{red}{1},\textcolor{red}{1},1),(1,\textcolor{red}{1},\textcolor{red}{0},\textcolor{red}{0},1) \},\] 
which can be decomposed into $1$ set of size $2$ whereby the Hamming distance between any two elements is at most $3$.} \end{example}

In the next section, we begin by deriving upper and lower bounds on the required information exchange to synchronize $(t,h,\ell)$-sets.
\begin{figure}
\begin{center}
%\psfrag{xxx}[cc][cc][.77][0]{$\cS_A\cap\cS_B$}
%\psfrag{yyy}[cc][cc][.77][0]{$\subset \cS_A\bigtriangleup\cS_B$}
%\psfrag{zzz}[cc][cc][.77][0]{$\subset \cS_A\bigtriangleup\cS_B$}
%\psfrag{www}[cc][cc][.6][0]{$\le \ell$}
%\psfrag{qqq}[cc][cc][.77][0]{$\subset\cS_A\bigtriangleup\cS_B$}
\includegraphics[width=.80\columnwidth]{./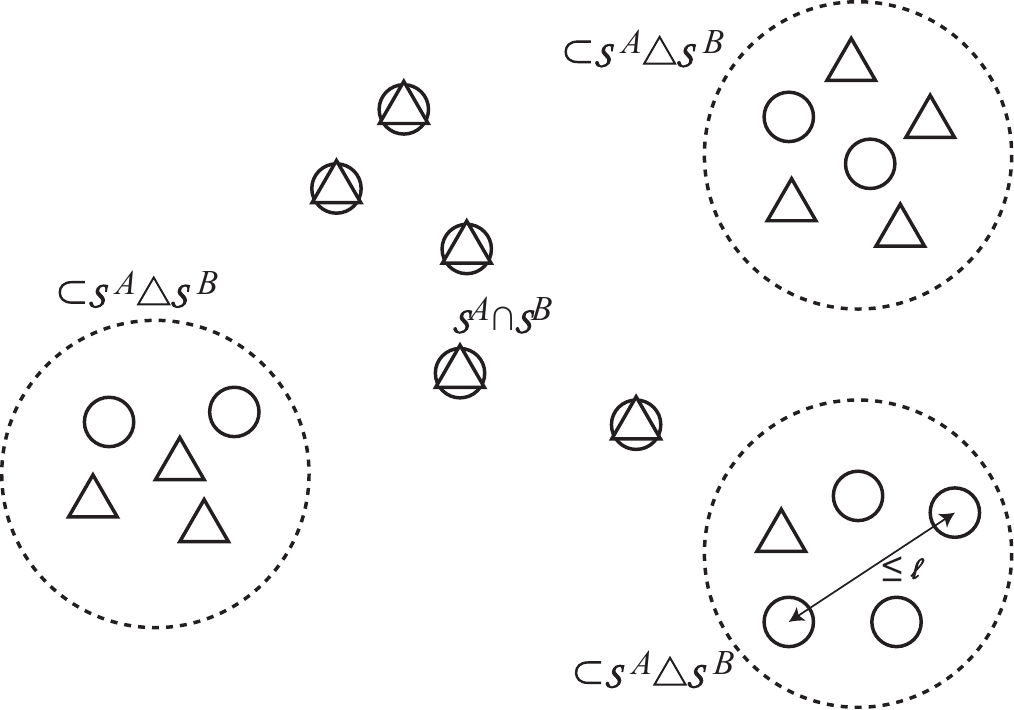}
\caption{An illustration for two $(t,h,\ell)$-sets $\cS^A$ (circles) and $\cS^B$ (triangles), where %subsets of 
$\cS^A\bigtriangleup\cS^B$ % are 
is divided into $t=3$ subsets each with at most $h=6$ elements such that any pair of elements within one of these subsets are at distance at most $\ell$.}
\label{fig:set-recon}
\end{center}
\end{figure}

\section{Bounds on Information Exchange}\label{sec:bounds}

We begin this section by introducing a graphical interpretation of our problem and then revisiting a result from~\cite{KLT03}. Afterwards, we consider non-asymptotic and asymptotic upper and lower bounds for the synchronization of $(t, h, \ell)$-sets.

Consider the undirected graph $\cG_{(t,h,\ell)}$ where each vertex in $\cG_{(t,h,\ell)}$ represents a set of length-$n$ $q$-ary strings. Notice that under this setup, $\cG_{(t,h,\ell)}$ has exactly $2^{q^n}$ vertices. There exists an edge between two vertices in $\cG_{(t,h,\ell)}$ if and only if the two vertices are $(t, h, \ell)$-sets. 

The following proposition from \cite{KLT03} depicts a close connection between the minimum information exchange for our set reconciliation problem and chromatic number of $\cG^2_{(t,h,\ell)}$. The square graph $\cG^2_{(t,h,\ell)}$ is a graph over the same vertex set as $\cG_{(t,h,\ell)}$ where two vertices are adjacent if their distance in $\cG_{(t,h,\ell)}$ is 1 or 2. 

\begin{theorem}\label{th:chrom}\emph{\cite[Theorem 1]{KLT03}}
The minimum information exchange required in one round of communication for reconciling any two $(t, h, \ell)$-sets is the chromatic number $\chi(\cG_{t,h,\ell}^2)$.
\end{theorem}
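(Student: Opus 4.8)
The plan is to recast the one-round communication problem as a vertex-coloring problem on $\cG^2_{(t,h,\ell)}$ and then prove matching achievability (upper) and converse (lower) bounds. First I would fix what a protocol is: since only a single round is allowed and information flows from $A$ to $B$, Host $A$'s transmitted message can depend only on $\cS^A$. Thus any protocol is equivalent to a labeling $c$ that assigns to each possible set (i.e.\ each vertex of $\cG_{(t,h,\ell)}$) the message Host $A$ would send, and the amount of information exchanged is the number of distinct labels used. The promise of the problem is that $(\cS^A,\cS^B)$ are $(t,h,\ell)$-sets, i.e.\ that $\cS^A$ and $\cS^B$ are at distance at most $1$ in $\cG_{(t,h,\ell)}$ (distance $0$ corresponding to the empty symmetric difference). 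Upon receiving $c(\cS^A)$, Host $B$ must recover $\cS^A$, equivalently $\cS^A\bigtriangleup\cS^B$, from the pair $(\cS^B, c(\cS^A))$.

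The crux is to characterize exactly when decoding can fail, and this is where I expect the main work to lie. Decoding is impossible precisely when there exist two distinct vertices $u \neq v$ and a set $\cS^B$ such that both $u$ and $v$ are within distance $1$ of $\cS^B$ in $\cG_{(t,h,\ell)}$ and $c(u)=c(v)$: in that event Host $B$, holding $\cS^B$ and seeing the common label, cannot tell whether Host $A$ holds $u$ or $v$. I would then show that such a confusing witness $\cS^B$ exists for a pair $\{u,v\}$ \emph{iff} $d_{\cG_{(t,h,\ell)}}(u,v)\le 2$. One direction is the triangle inequality: if $u,v$ are each within distance $1$ of a common $\cS^B$, then $d(u,v)\le 2$. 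The converse requires exhibiting the witness, namely a common neighbor (for $d(u,v)=2$) or taking $\cS^B\in\{u,v\}$ (for $d(u,v)\le 1$); this is the step I would treat most carefully, since it must be verified that such an intermediate set genuinely forms $(t,h,\ell)$-sets with both endpoints, and that the edge case of the empty symmetric difference (distance $0$) is handled. By definition of the square graph, the set of confusable pairs is exactly the edge set of $\cG^2_{(t,h,\ell)}$.

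With this characterization in hand, both bounds follow quickly. For the converse, any valid protocol $c$ must assign different labels to every pair of vertices adjacent in $\cG^2_{(t,h,\ell)}$, so $c$ is a proper coloring of $\cG^2_{(t,h,\ell)}$ and hence uses at least $\chi(\cG^2_{(t,h,\ell)})$ labels. For achievability, I would start from an optimal proper coloring of $\cG^2_{(t,h,\ell)}$ and let Host $A$ send the color of $\cS^A$; Host $B$ decodes by outputting the unique vertex within distance $1$ of $\cS^B$ carrying the received color. Uniqueness is guaranteed because any two such candidates would be at distance at most $2$, hence adjacent in $\cG^2_{(t,h,\ell)}$ and therefore differently colored, contradicting that they share the received color. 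This yields a valid scheme using exactly $\chi(\cG^2_{(t,h,\ell)})$ messages, matching the lower bound and completing the proof.
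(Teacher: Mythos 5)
The paper never proves this statement itself --- it is imported verbatim from \cite{KLT03} --- so there is no in-paper argument to compare against; the relevant benchmark is the canonical proof of that cited result. Your proposal is correct and is essentially that canonical argument: characterizing confusable pairs (a witness $\cS^B$ compatible with both $u$ and $v$ exists iff $d_{\cG_{(t,h,\ell)}}(u,v)\le 2$, with the distance-$1$ case witnessed by $\cS^B=v$ and the empty symmetric difference treated as a valid input) shows that valid one-round protocols are exactly the proper colorings of $\cG^2_{(t,h,\ell)}$, which yields the matching converse and achievability bounds.
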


As a consequence of Theorem~\ref{th:chrom}, there exist hosts $A$ and $B$ where $(\cS^A, \cS^B)$ are $(t,h,\ell)$-sets such that at least $\log_2(\chi(\cG_{(t,h,\ell)}^2))$ bits of information is necessary for Host A to transmit to Host $B$ so that Host $B$ can determine $\cS^A \bigtriangleup \cS^B$. Theorem~\ref{th:bulbH} provides non-asymptotic upper and lower bounds for $\chi(\cG_{t,h,\ell}^2)$.

\begin{theorem}\label{th:bulbH} Let $\cG = \cG_{(t,h,\ell)}$ and suppose $\cC$ is a $q$-ary code with length $n$ and minimum Hamming distance $\ell+1$. We have
\begin{align*}
\sum_{j=1}^t \nchoosek{|\cC|}{j}& \left( \sum_{k=0}^{h-1} \nchoosek{r_1}{k} \right)^j \\
&\leq \chi(\cG^2)\leq 
\sum_{{j=0}}^{2t} \nchoosek{q^n}{j}  \left( \sum_{k=0}^{h-1} \nchoosek{r_2}{k}\right)^j,
\end{align*}
with 
$r_1 = \sum_{i=1}^{\lfloor \ell/2 \rfloor} \nchoosek{n}{i}  (q-1)^i,$
$r_2 = \sum_{i=1}^{\ell} \nchoosek{n}{i}(q-1)^i$.
%A = \frac{2^b}{\sum_{j=0}^{\ell} \nchoosek{b}{j}}$ and $r = \sum_{i=1}^{\lfloor \ell/2 \rfloor} \nchoosek{b}{i}.
\end{theorem}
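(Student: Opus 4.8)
The plan is to sandwich $\chi(\cG^2)$ between a clique lower bound and a greedy-coloring upper bound, using throughout the fact that an edge of $\cG$ corresponds to the symmetric difference $D=\cS\bigtriangleup\cR$ of two vertices decomposing into at most $t$ clusters, each of size at most $h$ and with pairwise Hamming distance at most $\ell$, while distance at most $2$ in $\cG$ corresponds to a union of two such decompositions. For the upper bound I would invoke $\chi(\cG^2)\le\Delta(\cG^2)+1$, so that it suffices to count vertices within distance $2$; for the lower bound I would use $\chi(\cG^2)\ge\omega(\cG^2)$ and exhibit an explicit clique.

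For the lower bound I would fix the empty set $\emptyset$ as reference and count its neighbors in $\cG$. Any two neighbors of $\emptyset$ are at distance at most $2$ from each other, hence mutually adjacent in $\cG^2$, so the neighborhood of $\emptyset$ is a clique in $\cG^2$. To obtain a clean count of neighbors I would use the code $\cC$ of minimum distance $\ell+1$ to supply well-separated cluster centers: pick $j\le t$ codewords $\bfc_1,\dots,\bfc_j$ (in $\nchoosek{|\cC|}{j}$ ways), and around each $\bfc_i$ build a cluster by adjoining to $\{\bfc_i\}$ up to $h-1$ strings of the form $\bfc_i+\bfe$ with $1\le\wt(\bfe)\le\lfloor\ell/2\rfloor$ (there are $r_1$ such offsets $\bfe$, giving $\sum_{k=0}^{h-1}\nchoosek{r_1}{k}$ choices per center). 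The radius $\lfloor\ell/2\rfloor$ guarantees via the triangle inequality that intra-cluster distances are at most $\ell$, while $\ell+1>2\lfloor\ell/2\rfloor$ forces the radius-$\lfloor\ell/2\rfloor$ balls around distinct codewords to be disjoint; this disjointness both makes each resulting set a valid $(t,h,\ell)$-neighbor of $\emptyset$ and makes the map from choices to sets injective (a codeword's ball is nonempty in $\cS$ exactly when that codeword was chosen). Multiplying the per-center counts and summing over $j$ yields the stated lower bound.

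For the upper bound I would bound $\Delta(\cG^2)$ by counting, for a fixed vertex $\cS$, the possible values of $D=\cS\bigtriangleup\cR$ over all $\cR$ within distance $2$; since $\cR=\cS\bigtriangleup D$ is recovered from $D$, this bounds the number of such $\cR$. If $d_{\cG}(\cS,\cR)\le 1$ then $D$ is a union of at most $t$ valid clusters. If $d_{\cG}(\cS,\cR)=2$, I choose an intermediate vertex $\cW$ and write $D=(\cS\bigtriangleup\cW)\bigtriangleup(\cW\bigtriangleup\cR)\subseteq(\cS\bigtriangleup\cW)\cup(\cW\bigtriangleup\cR)$; intersecting $D$ with each of the at most $t$ clusters of $\cS\bigtriangleup\cW$ and each of the at most $t$ clusters of $\cW\bigtriangleup\cR$ exhibits $D$ as a union of at most $2t$ clusters, each still of size at most $h$ and pairwise distance at most $\ell$ (each piece is a subset of an original cluster). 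I then (over)count such unions by selecting $j\le 2t$ representative strings (in $\nchoosek{q^n}{j}$ ways) and, for each, up to $h-1$ further strings within distance $\ell$ (the $r_2$ offsets), giving $\sum_{k=0}^{h-1}\nchoosek{r_2}{k}$ per representative; the $j=0$ term covers $\cR=\cS$. Summing over $j$ gives $\chi(\cG^2)\le\Delta(\cG^2)+1$ bounded by the claimed expression.

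The two counting formulas themselves are routine; the steps needing care are the geometric lemmas that justify them. On the lower-bound side this is the simultaneous guarantee of intra-cluster distance $\le\ell$ and inter-cluster disjointness, ensuring the constructed sets are distinct legitimate neighbors. The main obstacle, however, is the distance-$2$ decomposition in the upper bound: I must verify that restricting $D$ to the clusters of its two constituent differences never produces a cluster exceeding $h$ elements or distance $\ell$, which holds precisely because each restricted piece is a subset of an already valid cluster, so that $D$ refines into at most $2t$ clusters of exactly the shape the count enumerates.
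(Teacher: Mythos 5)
Your proposal is correct and follows essentially the same route as the paper: a clique lower bound built from radius-$\lfloor\ell/2\rfloor$ clusters around codewords of $\cC$ viewed as neighbors of $\emptyset$, and the upper bound $\chi(\cG^2)\le\Delta(\cG^2)+1$ obtained by observing that vertices at distance at most $2$ form $(2t,h,\ell)$-sets and counting representative-plus-offsets descriptions with radius $\ell$. Your write-up is in fact slightly more careful than the paper's, which asserts without proof the injectivity of the clique construction and the claim that adjacency in $\cG^2$ yields $(2t,h,\ell)$-sets, both of which you justify explicitly.
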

\begin{proof}
In order to give a lower bound on $\chi(\cG^2)$, we give a lower bound on the size of the largest clique in $\cG^2$, which we denote as $\varsigma(\cG^2)$. The upper bound will be derived by providing an upper bound on the maximum degree of a vertex, which we denote as $\Delta(\cG^2)$. It is well-known (see \cite{W01} for instance) that 
$ \varsigma(\cG^2) \leq \chi(\cG^2) \leq \Delta(\cG^2) + 1. $

We produce a lower bound for $\varsigma(\cG^2)$ by considering the size of a clique in $\cG^2$. Suppose that Host $A$ contains no elements from $\mathbb{F}_q^n$ so that $\cS^A = \emptyset$. Let $\tilde{\cS}^B \subseteq \mathbb{F}_q^n$ be such that $|\tilde{\cS}^B| \leq t$ and for any $\bfx_1, \bfx_2 \in \tilde{\cS}^B$, we have $d_H(\bfx_1, \bfx_2) \geq \ell+1$. Let $\tilde{\cS}^B = \{ \bfx_1, \bfx_2, \ldots, \bfx_{t'} \}$ where $t' \leq t$. We form the set $\cS^B$, which represents the set of elements on Host $B$, from $\tilde{\cS}^B$ in the following manner. First we initialize $\cS^B = \tilde{\cS}^B$. Then, for $1 \leq i \leq t'$, add the elements of $\cS_i \subseteq \mathbb{F}_q^n$ to $\cS^B$, where $\cS_i$ is chosen such that for any $\bfy \in \cS_i$, $d_H(\bfx_i, \bfy) \leq \lfloor \frac{\ell}{2} \rfloor$ and $|\cS_i| \leq h-1$.

Under this setup, let $v_1$ be the vertex in $\cG$ representing the set $\cS^A$ and similarly let $v_2$ be the vertex in $\cG$ representing the set $\cS^B$. It is straightforward to observe that that since $(\cS^A, \cS^B)$ are $(t, h, \ell)$-sets, the distance between $v_1$ and $v_2$ in $\cG$ is one. Let $v_2, v_2'$ be two vertices in $\cG$ that represent two different possibilities for $\cS^B$. Then, by design the distance between $v_2$ and $v_2'$ is at most two so that the vertices are adjacent in $\cG^2$. Thus, the vertices $v_1, v_2, v_2'$ are all pairwise adjacent and they are part of a clique. Let $\cC \subseteq \mathbb{F}_q^n$ be a code with minimum Hamming distance $\ell+1$. From the previous discussion, notice that there are at least
$$\sum_{j=1}^t \nchoosek{|\cC|}{j} \cdot \left( \sum_{k=0}^{h-1} \nchoosek{r_1}{k} \right)^j $$
possible choices for the set $\cS^B$ where $r_1 = \sum_{i=1}^{\lfloor \ell/2 \rfloor} \nchoosek{n}{i} \cdot (q-1)^i$. 

%Applying the Gilbert-Varshamov bound \cite{R06} for $|\cC_{\ell+1}|$, the lower bound in the lemma follows.

We now produce an upper bound for $\Delta(\cG^2)$. Since the number of neighbors for a vertex $v \in \cG^2$ does not depend on the choice of $v$, we will simply assume $v \in \cG^2$ represents $\cS^A$ where, as before, $\cS^A=\emptyset$. Notice that if $v_1, v_2 \in \cG^2$ are adjacent, then $v_1, v_2$ represent $(2t, h, \ell)$-sets. We now count the number of possible choices for $\cS^B$ so that the sets $(\cS^A, \cS^B)$ are $(2t, h, \ell)$-sets under the assumption that $\cS^A = \emptyset$. We proceed similarly to before. Let $\tilde{\cS}^B \subseteq \mathbb{F}_q^n$ be any set of at most $2t$ elements from $\mathbb{F}_q^n$. We form the set $\cS^B$ in a manner similarly to before (two paragraphs up) given the set $\tilde{\cS}^B$. Under this setup, there are at most
$$ \sum_{j=1}^{2t} \nchoosek{q^n}{j} \cdot \left( \sum_{k=0}^{h-1} \nchoosek{r_2}{k} \right)^j  $$
possible choices for the set $\cS^B$ where $r_2 = \sum_{i=1}^{\ell} \nchoosek{n}{i}(q-1)^i$, which gives the upper bound in the lemma.
\end{proof}

\subsection*{Asymptotic Bounds}
\global\long\def\lb#1{\check{#1}}
\global\long\def\ub#1{\hat{#1}}

We now provide asymptotic upper and lower bounds for the information exchange. 

\begin{theorem}\label{th:asmptbnds}
	Let $\lambda=\frac{\ell}{n}$ and $\eta=\frac{\log_{q}h}{n}$. Assume
	$\lambda$ is between $0$ and $1-1/q$ and bounded away from both.
	If $n,h\to\infty$, 
	\begin{align*}
	\frac{\log_{q}\chi\left(\mathcal{G}^{2}\right)}{tnh} & \ge H_{q}\left(\frac{\lambda}{2}\right)-\eta+o(1),\\
	\frac{\log_{q}\chi\left(\mathcal{G}^{2}\right)}{tnh} & \le2(H_{q}\left(\lambda\right)-\eta)+o(1),
	\end{align*}
	where for the upper bound we also need $\eta<H_{q}\left(\lambda\right)-\epsilon$
	for some positive $\epsilon$.
\end{theorem}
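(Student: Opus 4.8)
The plan is to feed the finite-length bounds of \Cref{th:bulbH} into the standard $q$-ary entropy estimate for Hamming-ball volumes and then to identify, in each of the two nested sums, the single term that dominates on a logarithmic scale. Write $V_q(n,\rho)=\sum_{i=0}^{\lfloor \rho n\rfloor}\binom{n}{i}(q-1)^i$ for the volume of a Hamming ball of radius $\rho n$. Because $\lambda$ is bounded away from $0$ and $1-1/q$, the classical estimate $\frac1n\log_q V_q(n,\rho n)=H_q(\rho)+o(1)$ applies at $\rho=\lambda/2$ and $\rho=\lambda$, giving $\log_q r_1=nH_q(\lambda/2)+o(n)$ and $\log_q r_2=nH_q(\lambda)+o(n)$ (the omitted $i=0$ term is negligible). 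For the lower bound I would instantiate $\cC$ by a Gilbert--Varshamov code of distance $\ell+1$, so $|\cC|\ge q^{n(1-H_q(\lambda))+o(n)}\ge t$; the exact size will turn out to be irrelevant to leading order.

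Next I would reduce every sum to its largest summand. In the outer sums the ratio of consecutive terms is $\frac{q^n-j}{j+1}\,\Sigma\ge 1$ (resp.\ $\frac{|\cC|-j}{j+1}\,\Sigma\ge1$), where $\Sigma$ denotes the inner sum, so each outer sum is squeezed between its top term and $(2t+1)$ (resp.\ $t$) times it; hence it suffices to keep $j=2t$ for the upper bound and $j=t$ for the lower bound. For the inner sum $\Sigma(r)=\sum_{k=0}^{h-1}\binom{r}{k}$, as soon as $h-1\le r/2$ the maximal term is $\binom{r}{h-1}$ and $\binom{r}{h-1}\le\Sigma(r)\le h\binom{r}{h-1}$. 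The hypothesis $\eta<H_q(\lambda)-\epsilon$ is exactly what guarantees $h=q^{\eta n}\ll r_2/2$, so this applies to $r_2$; for $r_1$ it applies whenever $\eta<H_q(\lambda/2)$.

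The heart of the estimate is then $\log_q\binom{r}{h-1}$ for $h-1\ll r$. Using $\log_q\binom{r}{m}=m\log_q(r/m)+O(m)+O(m^2/r)$ together with Stirling, and substituting the values of $\log_q r$ and $\log_q(h-1)=\eta n+o(n)$, I get $\frac1{nh}\log_q\binom{r_1}{h-1}\to H_q(\lambda/2)-\eta$ and $\frac1{nh}\log_q\binom{r_2}{h-1}\to H_q(\lambda)-\eta$; here the $O(m)$ and $O(m^2/r)$ corrections, the factor $h$ from $\Sigma$, the factor $2t+1$, and $\log_q\binom{q^n}{2t}\le 2tn$ all vanish after division by $tnh$ (while $\log_q\binom{|\cC|}{t}\ge0$ only helps the lower bound). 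Raising to the powers $t$ and $2t$ and dividing by $tnh$ then yields $H_q(\lambda/2)-\eta+o(1)$ and $2\bigl(H_q(\lambda)-\eta\bigr)+o(1)$, respectively. When $\eta\ge H_q(\lambda/2)$ the claimed lower bound is nonpositive and holds trivially because $\chi(\cG^2)\ge1$, which is why no extra hypothesis on $\eta$ is imposed there.

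The main obstacle is bookkeeping rather than ideas: one must check that the ball-volume estimate holds with a uniform $o(n)$ error on the relevant range of $\rho$ (this is where $\lambda$ being bounded away from $0$ and $1-1/q$ is used) and, crucially, that the localization condition $h-1\le r/2$ for the inner sum is guaranteed precisely by $\eta<H_q(\lambda)-\epsilon$ for $r_2$, which explains why this hypothesis is needed only for the upper bound. Care is likewise needed to confirm that the lower-order contributions ($O(m^2/r)$, the multiplicities $h$ and $2t+1$, the code size, and $\binom{q^n}{2t}$) are genuinely $o(tnh)$, which they are since $h\le r$ and $t=o(q^n)$ in the regime of interest.
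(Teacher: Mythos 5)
Your proposal is correct and follows essentially the same route as the paper's proof: both start from the non-asymptotic bounds of Theorem~\ref{th:bulbH}, convert $r_1$ and $r_2$ to entropy form via the Hamming-ball volume estimate (using that $\lambda$ is bounded away from $0$ and $1-1/q$), reduce each nested sum to its dominant term, estimate $\log_q\binom{r}{h-1}$ by Stirling-type bounds to get $hn(H_q(\cdot)-\eta)+o(hn)$, and invoke $\eta<H_q(\lambda)-\epsilon$ only to justify the localization $h\ll r_2$ needed for the upper bound. Your added remarks (instantiating $\cC$ via Gilbert--Varshamov and noting the lower bound is trivial when $\eta\ge H_q(\lambda/2)$) are harmless refinements of details the paper leaves implicit.
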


The proof is given in the appendix. We compare these bounds\footnote{{In~\cite{conf1}, we had included an erroneous bound (eq.~(5)) that has been removed here. Furthermore, we have simplified the bounds compared to~\cite{conf1}.}} for fixed values of $\eta$ in \figurename~\ref{fig:bounds}. 
%Notice, that if we assume the elements in the symmetric difference are unrelated, then the minimum amount of required information exchange between two hosts is $t h \log_2(2^n) = t h n$ when binary data is being reconciled so that the rate of these schemes is necessarily greater than one (since we normalize by $thn$). Therefore, 

\begin{figure}
\begin{center}
\includegraphics[scale=0.85]{./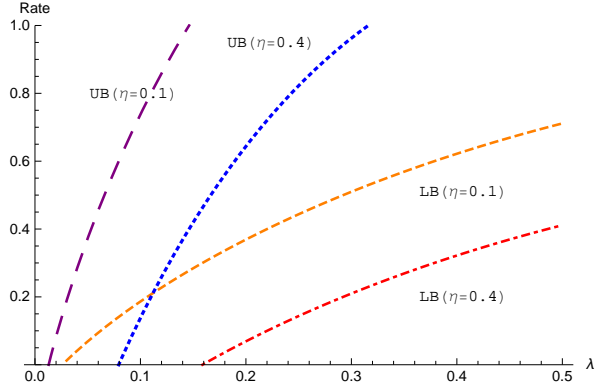}
\caption{Upper and lower bounds on the rate of information exchange, $\frac{\log_{q}\chi\left(\mathcal{G}^{2}\right)}{tnh}$, for $q=2$ and $n,h\to\infty$.}
\label{fig:bounds}
\end{center}
\end{figure}

Assume $q=2$. Then from Theorem~\ref{th:asmptbnds}, we have that the optimal number of bits of information exchange is at most $2thn(H(\ell/n)-\lg h /n)$. Notice that if the approach from \cite{MT02} was used, then at least $thn$ bits of information, which for $n,h$ large enough and small $\ell$, is more than $2thn(H(\ell/n)-\lg h /n)$. Thus, algorithms for reconciling $(t,h,\ell)$-sets have the potential to reduce the amount of information exchanged between hosts. As a starting point, in the next section, we consider an approach to reconciling $(1,h,\ell)$-sets.

\section{Reconciling $(1, h, \ell)$-sets}\label{sec:r1sets}

\newcommand{\comp}{C} % for renaming H_\comp 

In this section, we consider transmission schemes for the problem of reconciling $(1,h,\ell)$-sets, where $|\cS^A \bigtriangleup \cS^B|\le h$ and for all $\bfu, \bfw \in \cS^A \bigtriangleup \cS^B$, we have $d_H(\bfu, \bfw) \leq \ell$. We first describe the encoding procedure and then discuss the decoding method. Recall that the goal is to compute $\cS^A \bigtriangleup \cS^B$ where $(\cS^A, \cS^B)$ are $(1,h,\ell)$-sets consisting of elements from $\mathbb{F}_q^n$, where $q$ is a power of $2$. %Afterwards, we present an example illustrating the encoding and decoding 

The key idea behind the encoding and decoding is to encode the symmetric difference $\cS^A \bigtriangleup \cS^B$ by specifying one element say $X \in \cS^A \bigtriangleup \cS^B$ and then specifying the remaining elements in $\cS^A \bigtriangleup \cS^B$ by describing their location relative to $X$. As a result, as will be described shortly, the information transmitted from Host $A$ to Host $B$ can be decomposed into two parts denoted $\bfw_1$ and $\bfw_2$. The information in the $\bfw_1$ part describes the locations of the elements in $\cS^A \bigtriangleup \cS^B$ relative to $X$. The information in the $\bfw_2$ part will be used to fully recover $X$. Once $X$ is known and the locations of the other elements in $\cS^A \bigtriangleup \cS^B$ are known relative to $X$, then the symmetric difference $\cS^A \bigtriangleup \cS^B$ can be recovered.

We first introduce some useful notation. An $[n,d]_{q}$ code is a linear code over $\mathbb{F}_q$ of length $n$ with minimum Hamming distance $d$. Suppose $r$ is a positive integer where $r < n$. Let $\alpha$ be a primitive element in $\mathbb{F}_{q^r}$. Furthermore, let $H$ be an ${r \times n}$ parity check matrix with elements from  $\mathbb{F}_q$. Suppose $\cS = \{ \bfx_1, \bfx_2, \ldots, \bfx_s \} \subseteq \mathbb{F}_q^n$ and define {the syndrome of $\cS$ under $H$ as the multiset}
\[H \cdot \cS=\{ H \cdot \bfx_1, H \cdot \bfx_2, \ldots, H \cdot \bfx_s \}.\] Furthermore, define $\cS_{H,i}$ where $1 \leq i \leq q^r$ as \[\cS_{H,i} = \{ \bfx \in \cS : H \cdot \bfx = \alpha^i \},\] where with an abuse of notation, $\alpha^{q^r}=\bf0$. 
%Suppose $\cS_{O,H,i} =(\bfx_1, \ldots, \bfx_{|\cS_{H,i} |})$ is the lexicographic ordering of the elements in $\cS_{H,i}$. 
We refer to the $j$-th element in $\cS_{H,i}$, when ordered in lexicographic order, as $\cS_{H,i,j}$. Finally, let $I_H : \{ \mathbb{F}_q^n \} \to \mathbb{Z}_{q^n}^{q^r}$ be defined as
$$ I_H(\cS) = (|\cS_{H,1}|,|\cS_{H,2}|,\dotsc,|\cS_{H,{q^r-1}}|,|\cS_{H,0}|), $$
for $\cS \subseteq \{ \mathbb{F}_q^n \}$. We provide the following example illustrating these definitions.

\begin{example}\label{ex:mapping} Suppose $q=2, n=3$, $\cS = \{ (0,0,0)$, $(1,1,0)$, $(1,0,1)$, $(0,0,1) \}$. We represent the elements of $\mathbb{F}_4$ as $\alpha^1 = (0,1)^T$, $\alpha^2 = (1, 1)^T$, $\alpha^3 = (1, 0)$, and $\alpha^4=(0, 0)^T$ and let \[H= {\left(\alpha^3\ \alpha^1\ \alpha^2\right)=}\left( \begin{array}{ccc}
1 & 0 & 1 \\
0 & 1 & 1  \end{array} \right).\] 
We have $I_{H}(\cS) = (1, 2, 0, 1)$. In this case, 
\begin{align*}
\cS_{H,1}&=\{(1,0,1)\},\\
\cS_{H,2}&=\{ (0,0,1), (1,1,0) \},\\
\cS_{H,4}&=\{(0,0,0)\}
\end{align*}
and $\cS_{H,2,2}=(1,1,0).$
 \end{example}

To describe the encoding (and subsequent decoding) procedure, we also make use of the following matrices: 
\begin{enumerate}
\item $H_{\ell} \in \mathbb{F}_q^{r \times n}$, for some positive integer $r$, is the parity check matrix for an $[n, 2\ell+1 ]_{q}$ code $\cC_{\ell}$.
\item $H_\comp \in \mathbb{F}_2^{u \times q^r}$, for some positive integer $u$, is the parity check matrix for a $[q^r, 2h+1]_2 $ code $\cC_\comp $.
\item $H_{F} \in \mathbb{F}_q^{n \times n}$, and $\bar{H}_{\ell} \in \mathbb{F}_q^{(n-r) \times n}$ are such that
$H_{F}=\left( \begin{array}{c}
H_{\ell} \\
\bar{H}_{\ell}
\end{array} \right)$
has full rank.
\end{enumerate}

In addition to these, we will require one more tool to encode $\bfw_2$. We first introduce some additional notation similar to \cite{KBE11}. Let $\bfb=(b_1, b_2, \ldots, b_m)$ be a sequence of length $m$ with elements from $\mathbb{F}_{q^{n-r}}$ such that for any $\bfa\in\{0,1\}^m$ with at most $s$ nonzero entries, $\bfa\cdot\bfb\neq0.$
% positive integer $k$ where $k \leq s$, we have \tr{?} $$ \sum_{j=1}^{k} a_j b_{i_j} \neq  0$$ where $1 \leq i_1 < i_2 < \cdots < i_k \leq m$ are distinct and \tr{any?} $\{a_1, a_2, \ldots, a_k \} \subseteq \{ -1, 1 \}$. 
Then, we refer to the sequence $\bfb$ as a $B_{s}$ sequence. Notice that a $B_s$ sequence can be formed from the columns of a parity check matrix for an $[m,d]_q$ code with dimension $n-(n-r)$ provided $d \geq s+1$.

We now proceed by describing the encoding procedure followed by the decoding procedure.

%We assume that $\gf(2^{r})$ is a subfield of $\gf(2^{b-r})$. 

\subsection{Encoding}\label{subsec:enc}

The following procedure is performed on both Host $A$ and Host $B$ but the notation corresponds to Host $A$. We assume that $m \geq q^r$ and that $\bfb=(b_1, b_2, \ldots, b_m)$ is a $B_{h}$ sequence. In Lemma~\ref{lem:encodecorrect}, we give a sufficient condition for {the existence of such $\bfb$ of length} $m \geq q^r$ so that the encoding procedure executes correctly.
\begin{enumerate}
\item Let $\bfz^A = I_{H_{\ell}}(\cS^A)\bmod 2$.
\item Define $\bfw_1^A = H _\comp  \cdot \bfz^A$ {(in $\mathbb{F}_2$)}.
%\item We define $\bfh=(h_\comp , \ldots, h_{2^r}) \in \mathbb{Z}^{2^r}_{2^{b-r}}$ so that for $1 \leq i \leq 2^r$, let $h_i = F^{-1}_{b-r}(\bar{H}_s \cdot \sum \cS_{H,i})$.
\item Let 
\begin{align*}
\bfw_2^A &= \sum_{i=1}^{q^r} b_i \cdot \sum_{j=1}^{|\cS_{H_{\ell},i}|} \bar{H}_\ell \cdot {(\cS^A)}_{H_{\ell},i,j}\\
&=\sum_{\bfx\in\cS^A}b_{\log_\alpha(H_\ell\cdot \bfx) }\cdot\bar H_\ell\cdot \bfx
\end{align*}
where the computations are performed over $\mathbb{F}_{q^{n-r}}$ with $n-r > r$.
%\item For $1 \leq i \leq n$, let $\bfw_2 = \sum_{k=1}^{2^r} \sum \cS_{H_\comp ,i}$
\end{enumerate}
Then Host $A$ transmits $(\bfw_1^A, \bfw_2^A)$ to Host $B$. Similarly, Host $B$ computes and transmits $(\bfw_1^B,\bfw_2^B)$. Then decoding at each host is performed based on $\bfw_1=\bfw_1^A+\bfw_1^B$ and $\bfw_2=\bfw_2^A+\bfw_2^B$. Below, to give intuition to the encoding procedure, we consider these quantities, in addition to $\bfz=\bfz^A+\bfz^B$ (in $\mathbb{F}_2$). Further details will be presented in Subsection~\ref{ssc:dec}, which describes the decoding procedure.

To motivate the our encoding algorithm, first, note that points that lie in the intersection of $\cS^A$ and $\cS^B$ contribute to both $\bfz^A$ and $\bfz^B$, and so their contribution to $\bfz$ cancels out. Hence,
\begin{equation}\label{eq:z}
\bfz=I_{H_\ell}(\cS^A\bigtriangleup\cS^B).
\end{equation}
Thus $\bfz$ contains only information that are relevant to reconciliation. The relationship between $\bfz$ and $\bfw_1$,
\begin{equation}\label{eq:w1}
\bfw_1 =H_\comp\cdot(\bfz^A+\bfz^B)=H_\comp\cdot\bfz,
\end{equation}
is that of compression since $u\ll q^r$ and furthermore we will show later that given $\bfw_1$, the decoder can compute $\bfz$.

For $\bfw_2=\bfw_2^A+\bfw_2^B$, since the characteristic of the field $\mathbb{F}_{q^{n-r}}$ is 2, we have
\begin{equation}\label{eq:w2}
\bfw_2= \sum_{\bfx\in\cS^A\bigtriangleup\cS^B}b_{\log_\alpha(H_\ell\cdot \bfx) }\cdot\bar H_\ell\cdot \bfx.
\end{equation}

We next present an example for the encoding and then we give a lemma providing a sufficient condition for the existence of a $B_h$ sequence required for the encoding. In Subsection~\ref{ssc:dec}, present the details of recovering $\cS^A\bigtriangleup\cS^B$ from $\bfw_1$ and $\bfw_2$.

%\added{To gain intuition, let us consider the special case in which $\cS^A=\emptyset$ and $\cS^B$ contains at most $h$ vectors with pairwise distances at most $\ell$. Since $H_\ell$ has minimum distance $2\ell+1$, $I_{H_{\ell}}(\cS)$ is a binary vector, and thus, $z=I_{H_{\ell}}(\cS)$. Furthermore, weight of}

%The matrices $H_{\ell}$ and $H_\comp $ are used together to form the $\bfw_1$ portion of the information transmitted between Host $A$ and Host $B$. The matrix $\bar{H}_{\ell}$ is used to form the $\bfw_2$ portion. 

%Let $F_{UB} = n +   \frac{\ell}{2}(h-1) \log_2(n)$ so that $F_{UB}$ represents approximately the number of bits of information exchange required by our transmission scheme when the sets $\cS^A, \cS^B$ are binary vectors and let $F_{LB} = n + \frac{\ell}{2}\log_2(n) \left( h-2 \right)$ be the approximate lower bound on the number of bits of information exchange from (\ref{eq:approx1}). It is straightforward to see that when $h-1 \approx \log_2(n)$, we have $ \frac{F_{UB}}{F_{LB}} = 1$ so that under these conditions the number of bits transmitted between Host $A$ and $B$ is asymptotically optimal.

\begin{lemma}\label{lem:encodecorrect} For any integers $n,\ell,h$, there exists a $B_h$ sequence over $\mathbb{F}_{2^{n-r}}$ of length $m \geq q^r$ if 
	$$\mathbb{F}_{n^{(2\ell+1)h}} \subseteq \mathbb{F}_{q^{n-r}} ,$$
	and $h+1 \leq n^{2\ell+1}$.
\end{lemma}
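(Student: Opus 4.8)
The plan is to construct the required $B_h$ sequence explicitly from a Reed--Solomon (Vandermonde) parity-check matrix, exploiting the fact that the defining $B_h$ property---that $\bfa\cdot\bfb\neq0$ for every nonzero $\bfa\in\{0,1\}^m$ of weight at most $h$---is exactly the statement that any $\le h$ of the entries $b_i$ are linearly independent over $\mathbb{F}_2$. This holds automatically once the $b_i$ are realized as the columns of the parity-check matrix of a code of minimum distance at least $h+1$, so the whole lemma reduces to producing such a matrix whose columns lie in $\mathbb{F}_{q^{n-r}}$ and whose length is at least $q^r$.

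First I would set $Q=n^{2\ell+1}$ and work inside $\mathbb{F}_Q$. By the hypothesis $h+1\le n^{2\ell+1}=Q$, the field $\mathbb{F}_Q$ contains at least $h+1$ (in fact $Q$) distinct elements, which I list as $\gamma_1,\dots,\gamma_Q$. For each $i$ I form the length-$h$ vector $\bfc_i=(1,\gamma_i,\gamma_i^2,\dots,\gamma_i^{h-1})\in\mathbb{F}_Q^h$, fix an $\mathbb{F}_Q$-basis of the degree-$h$ extension $\mathbb{F}_{Q^h}=\mathbb{F}_{n^{(2\ell+1)h}}$, and let $b_i\in\mathbb{F}_{Q^h}$ be the field element whose coordinate vector is $\bfc_i$. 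By the subfield hypothesis $\mathbb{F}_{n^{(2\ell+1)h}}\subseteq\mathbb{F}_{q^{n-r}}$, each $b_i$ indeed lies in $\mathbb{F}_{q^{n-r}}$, so $\bfb=(b_1,\dots,b_m)$ with $m=Q=n^{2\ell+1}$ is a legitimate candidate.

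Next I would verify the $B_h$ property. Take any $\bfa\in\{0,1\}^m$ with $1\le\wt(\bfa)\le h$, and let $S$ be its support, so $|S|=t\le h$. The columns $\{\bfc_i:i\in S\}$ form $t$ columns of the $h\times m$ Vandermonde matrix with distinct nodes $\gamma_1,\dots,\gamma_m$; since any $h$ such columns assemble into a nonsingular $h\times h$ Vandermonde matrix, every subset of at most $h$ of them is $\mathbb{F}_Q$-linearly independent. In particular the nonempty $\{0,1\}$-weighted sum $\sum_{i\in S}\bfc_i$ is nonzero, whence $\bfa\cdot\bfb=\sum_{i\in S}b_i\neq0$ in $\mathbb{F}_{Q^h}$. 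As $\bfa$ was arbitrary, $\bfb$ is a $B_h$ sequence.

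Finally, for the length requirement $m\ge q^r$: by construction $m=n^{2\ell+1}$, and because $H_\ell$ is the parity-check matrix of an $\ell$-error-correcting $[n,2\ell+1]_q$ code---which may be taken to be a BCH code with redundancy $r\le 2\ell\lceil\log_q(n+1)\rceil$---we have $q^r\le(n+1)^{2\ell}\le n^{2\ell+1}=m$. I expect the only genuinely delicate point to be this last bookkeeping step, namely confirming that the chosen $H_\ell$ keeps $q^r$ below $n^{2\ell+1}$; the two stated hypotheses are precisely what legitimize the field tower $\mathbb{F}_Q\subseteq\mathbb{F}_{Q^h}\subseteq\mathbb{F}_{q^{n-r}}$ and the distance-$(h+1)$ Vandermonde construction, so the combinatorial core of the argument is routine once the construction is set up.
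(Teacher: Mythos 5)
Your proof is correct and follows essentially the same route as the paper: take the columns of a Reed--Solomon (Vandermonde) parity-check matrix of minimum distance $h+1$ over a field of size $n^{2\ell+1}$, interpret them via a basis as elements of $\mathbb{F}_{n^{(2\ell+1)h}} \subseteq \mathbb{F}_{q^{n-r}}$, and use the BCH redundancy bound $r \le 2\ell\lceil\log_q n\rceil$ to meet the length requirement $m \ge q^r$. The one loose point you flag yourself --- absorbing the ceiling (and the attendant factors of $q$) so that $q^r \le n^{2\ell+1}$ --- is equally loose in the paper's own proof, which simply asserts $q^r = n^{2\ell+1}$, so your write-up is, if anything, slightly more explicit about both the $B_h$ verification and this bookkeeping step.
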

\begin{IEEEproof} We prove the result by considering a code of length $M$ where $M = q^r$. First, notice that by the BCH lower bound  $r \leq 2 \ell \lceil \log_q (n) \rceil$ \cite{R06} and so we assume  $r = 2 \ell \lceil \log_q (n) \rceil$. In this case, we have that $M = n^{2\ell+1}$. We make use of an extended Reed-Solomon code of length $M$ with minimum distance $h+1$. An extended Reed-Solomon code of this length and minimum distance has a parity check matrix $H_{RS} \in \mathbb{F}_M^{h \times M}$ of dimension $h$. Interpreting each column vector of $H_{RS}$ as unique element from $\mathbb{F}_{M^h}$ (using an injective mapping similar to Example~\ref{ex:mapping}), we set the elements in $B_h$ to be the elements from $\mathbb{F}_{M^h}$ that correspond to columns of $H_{RS}$. If $\mathbb{F}_{M^h} \subseteq \mathbb{F}_{q^{n-r}}$ then the elements in $B_h$ are also from the field $\mathbb{F}_{q^{n-r}}$ and so the result follows.
\end{IEEEproof}

We note that for the case where $q=2$, we can strengthen Lemma~\ref{lem:encodecorrect} by using the BCH bound for binary codes. This is given in the next claim.

\begin{claim}\label{cl:encodecorrectbinary} For any integers $n,\ell,h$ and $q=2$, there exists a $B_h$ sequence over $\mathbb{F}_{2^{n-r}}$ of length $m \geq q^r$ if 
	$$\mathbb{F}_{n^{\ell h}} \subseteq \mathbb{F}_{2^{n-r}},$$
	and $h+1 \leq n^{\frac{2\ell+1}{2}}$.
\end{claim}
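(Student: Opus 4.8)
The plan is to mirror the proof of Lemma~\ref{lem:encodecorrect} but replace the generic $q$-ary BCH bound with the sharper bound available for binary cyclic codes. Recall that in Lemma~\ref{lem:encodecorrect} the parameter $r$ is the redundancy of the $[n,2\ell+1]_q$ code $\cC_\ell$, bounded via the BCH construction by $r \le 2\ell\lceil \log_q(n)\rceil$. When $q=2$, a binary BCH code that corrects $\ell$ errors has designed distance $2\ell+1$ but its number of parity checks is governed only by the $\ell$ \emph{odd} consecutive roots (since in characteristic $2$ the conjugate $\beta^{2i}$ of a root $\beta^i$ is automatically a root, so every even-indexed zero is free). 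This is the standard halving of the BCH redundancy for binary codes, giving $r \le \ell\lceil \log_2 n\rceil$ rather than $2\ell\lceil\log_2 n\rceil$.

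First I would set $M = 2^r$ and invoke this binary BCH bound to take $r = \ell\lceil \log_2(n)\rceil$, which yields $M = n^{\ell}$ in place of the $M=n^{2\ell+1}$ obtained in the previous lemma. Next I would reuse verbatim the extended Reed--Solomon construction from the proof of Lemma~\ref{lem:encodecorrect}: an extended RS code of length $M$ with minimum distance $h+1$ has a parity-check matrix $H_{RS}\in\mathbb{F}_M^{h\times M}$, and interpreting its columns as distinct elements of $\mathbb{F}_{M^h}$ via the injective map of Example~\ref{ex:mapping} produces a $B_h$ sequence of the required length $m=M=q^r$. Substituting $M=n^\ell$ gives $\mathbb{F}_{M^h}=\mathbb{F}_{n^{\ell h}}$, so the containment condition becomes $\mathbb{F}_{n^{\ell h}}\subseteq\mathbb{F}_{2^{n-r}}$, exactly the stated hypothesis.

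The side condition $h+1\le n^{(2\ell+1)/2}$ comes from requiring that the RS code actually exists, i.e. that its length $M$ does not exceed its alphabet size so that the $M$ evaluation points (plus the extension point) are available and the minimum distance $h+1$ is at most $M+1$. Since $M=n^\ell$, the cleanest way to obtain the stated bound is to track that $h+1 \le M^{1/2}\cdot(\text{correction})$; I would verify that the extended-RS length/distance constraint together with $M=n^{\ell}$ simplifies to $h+1\le n^{(2\ell+1)/2}$, which is the binary-specific strengthening of the $h+1\le n^{2\ell+1}$ bound in Lemma~\ref{lem:encodecorrect}.

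The main obstacle I expect is justifying the halved redundancy $r\le\ell\lceil\log_2 n\rceil$ cleanly. One must argue that for a binary narrow-sense BCH code correcting $\ell$ errors, the defining set need only contain the $\ell$ odd integers $\{1,3,\dots,2\ell-1\}$ (the even ones being conjugates under the Frobenius $x\mapsto x^2$), so that the number of parity checks is bounded by the sizes of at most $\ell$ cyclotomic cosets, each of size at most $\lceil\log_2 n\rceil$. I would cite the standard binary-BCH redundancy bound (as in \cite{R06}) rather than reprove the cyclotomic-coset counting, and then the rest of the argument is a routine substitution into the proof of Lemma~\ref{lem:encodecorrect}.
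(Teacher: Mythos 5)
Your proposal is correct and is essentially the paper's own argument: the paper states this claim without a separate proof, remarking only that it follows from Lemma~\ref{lem:encodecorrect} by substituting the binary BCH redundancy bound, which is exactly what you do (the defining set needs only the $\ell$ odd-indexed roots, the even ones being Frobenius conjugates, so $r\le\ell\lceil\log_2 n\rceil$ and $M\approx n^{\ell}$, after which the extended Reed--Solomon construction carries over verbatim and gives the containment $\mathbb{F}_{M^h}=\mathbb{F}_{n^{\ell h}}\subseteq\mathbb{F}_{2^{n-r}}$). The residual factor of $n^{1/2}$ you struggle to account for in the side condition is looseness in the paper's own statement rather than a gap in your argument: just as Lemma~\ref{lem:encodecorrect} inflates $q^{2\ell\lceil\log_q n\rceil}$ to $M=n^{2\ell+1}$ by absorbing ceiling slop, here $2^{\ell\lceil\log_2 n\rceil}\le n^{\ell}2^{\ell}\le n^{(2\ell+1)/2}$ whenever $2^{2\ell}\le n$, which is where the stated bound $h+1\le n^{(2\ell+1)/2}$ comes from, and under that accounting your requirement $h+1\le M$ and the claim's hypothesis agree to within the same slop.
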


\subsection{Decoding}\label{ssc:dec}

Suppose $(\bfw^A_1, \bfw^A_2)$ is the information transmitted by Host $A$ to Host $B$ and suppose $(\bfw^B_1, \bfw^B_2)$ is the result of the encoding procedure if it is performed on Host $B$. We illustrate how to recover $\cS^A \bigtriangleup \cS^B$ given $(\bfw^A_1, \bfw^A_2)$, $(\bfw^B_1, \bfw^B_2)$. 

We first describe in words the ideas behind the decoding procedure. The decoding procedure has two broad stages whereby, in the first stage we determine the locations of the elements in $\cS^A \bigtriangleup \cS^B$ relative to some $\bfx \in \cS^A \bigtriangleup \cS^B$ and then in the second stage the element $\bfx$ is recovered. The decoding begins by first recovering the syndromes of the elements in the set $\cS^A \bigtriangleup \cS^B$. More precisely, as a result of the error-correction ability of the code with a parity check matrix $H_\comp $, we first recover the set of syndromes $\cS_S = \{ H_\ell \cdot \bfy : \bfy \in \cS^A \bigtriangleup \cS^B \}$. Next, we arbitrarily choose an element say $\bfx_S \in \cS_S$. Given this setup, $\bfx$ (described earlier) is precisely equal to the element which maps to $\bfx_S$ under the map $H_\ell$, i.e., $\bfx_S = H_\ell \cdot \bfx$.

To determine the locations of the other elements in $\cS^A \bigtriangleup \cS^B$ relative to $\bfx$ we add every element in the set $\cS_S$ to $\bfx_S$. Let $\cS_L = \{ \bfy_S + \bfx_S : \bfy_S \in \cS_S{\setminus\bfx_S} \}$. As will be described below in more detail, from the set $\cS_L$ we can determine the values of the elements in $\cS^A \bigtriangleup \cS^B$ relative to $\bfx$. Next, the value of $\bfx$ is determined by canceling out some of the contributions of the elements in $(\cS^A \bigtriangleup \cS^B) \setminus \bfx$ from the vector $\bfw_2$. 

We now describe in more details the procedure before proving its correctness.
%We assume the procedure below is taking place on Host $B$.
Suppose $\cD_\comp : \mathbb{F}_2^{u} \to \mathbb{F}_2^{q^{r}}$ is the decoder for the code $C_\comp $ which by assumption has minimum Hamming distance at least $2h+1$. $\cD_\comp$ takes as input a syndrome and outputs an error vector with Hamming weight at most $h$. Let $\cD_\ell : \mathbb{F}_q^r \to \mathbb{F}_q^n$ be the decoder for $\cC_\ell$, which has Hamming distance $2\ell+1$.  The decoder $\cD_\ell$ takes as input a syndrome and outputs an error vector with Hamming weight at most $\ell$. In the following, $\alpha$ is a primitive element of $\mathbb{F}_q^r$. {The decoding algorithm is presented next:}

%We refer to the binary vector of length $n$ with all-zeros except in position $j$ as $\bfe_j^n$. 

\begin{enumerate}
\item Let $\hat{\bfz} = \cD_\comp(\bfw^A_1 + \bfw^B_1)$.
\item Suppose $\hat{\bfz}$ has $1$s in positions $\{ k_1, k_2, \ldots, k_v \}$. If $\hat{\bfz}=\bf0$, then let $ \cF = \emptyset$, and stop. 
%\item Let $\bfs' = F_{r}^{-1}(k_1) - F_{r}^{-1}(k_2)$.
\item Define $\hat{\bfe}_2=\cD_\ell(\alpha^{k_1} + \alpha^{k_2} ), \hat{\bfe}_3=\cD_\ell(\alpha^{k_1} + \alpha^{k_3} ), \ldots, \hat{\bfe}_{v}=\cD_\ell(\alpha^{k_1} + \alpha^{k_{v}} )$.
\item Let $\bfz' = \bfw_2^A + \bfw_2^B + \sum_{i=2}^{v} b_{k_i} \cdot \bar{H}_\ell \cdot \hat{\bfe}_i$.
\item Define $\bfs_2 = \bfz' / (b_{k_1} + b_{k_2} + \ldots + b_{k_v})$.
%\item If there exists a $\hat{\bfy}$ is such that $\hat{\bfy} \in \cS^B$, $H_{\ell} \cdot \hat{\bfy} = \alpha^{k_1}$, and $\bar{H}_{\ell} \cdot \hat{\bfy} = \bfs_2$, proceed to step 8). If no such $\hat{\bfy}$ exists, then go to step 7).
%\item Let $\bfz'' = \bfw_2^B + \bfw_2^A + \beta^{k_1} \cdot \bar{H}_\ell \cdot \hat{\bfe}$. 
%\item Let $\bfs_2' = \bfz' / (\beta^{k_1} + \beta^{k_2})$.
\item Let $\hat{\bfx}= H_{F}^{-1} \cdot 
(\alpha^{k_1}, \bfs_2)^T$.
\item $\cF = \{ {\hat{\bfx}}, \hat{\bfx} + \hat{\bfe}_2, \ldots, \hat{\bfx} + \hat{\bfe}_{v}  \}$.
\end{enumerate}

As shown below, the vector $\hat{\bfz}$ essentially gives us the set $\cS_S$ mentioned before. Furthermore, $\bfx_S=\alpha^{k_1}$ and $\{\alpha^{k_1}+\alpha^{k_2},\dotsc,\alpha^{k_1}+\alpha^{k_v}\}$ corresponds to $\cS_L$ described earlier.
\begin{claim}\label{cl:introcl} At the end of step 1) of the decoding $\hat{\bfz} = I_{H_\ell}(\cS^A \bigtriangleup \cS^B )$. \end{claim}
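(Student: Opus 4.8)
The plan is to trace the decoding algorithm through step 1) and verify that the error vector produced by $\cD_\comp$ equals the indicator vector $I_{H_\ell}(\cS^A \bigtriangleup \cS^B)$. First I would recall from equation~\eqref{eq:z} that $\bfz = \bfz^A + \bfz^B = I_{H_\ell}(\cS^A \bigtriangleup \cS^B) \bmod 2$, where the contributions of the common elements $\cS^A \cap \cS^B$ cancel. The key fact to extract from the definition of $I_{H_\ell}$ is that the entries of $\bfz$ count, modulo $2$, how many elements of the symmetric difference map to each possible syndrome $\alpha^i$ under $H_\ell$. So I must argue that $\bfz$ itself is a legitimate error vector of weight at most $h$, i.e., that $\bfz$ has at most $h$ nonzero entries.

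The crux of the argument rests on the $(1,h,\ell)$-set hypothesis. Because all elements of $\cS^A \bigtriangleup \cS^B$ lie within Hamming distance $\ell$ of one another, and $\cC_\ell$ has minimum distance $2\ell+1$, I would show that no two distinct elements of the symmetric difference can share the same $H_\ell$-syndrome: if $H_\ell \cdot \bfu = H_\ell \cdot \bfw$ then $\bfu - \bfw$ lies in the code $\cC_\ell$, but $\wt(\bfu - \bfw) = d_H(\bfu,\bfw) \leq \ell < 2\ell+1$, forcing $\bfu = \bfw$. Hence each syndrome class $(\cS^A \bigtriangleup \cS^B)_{H_\ell, i}$ has size at most $1$, so every entry of $I_{H_\ell}(\cS^A \bigtriangleup \cS^B) \bmod 2$ is exactly the indicator of whether some element of the symmetric difference maps to $\alpha^i$. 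Since $|\cS^A \bigtriangleup \cS^B| \leq h$, this vector has weight at most $h$. This is the step I expect to be the main obstacle, since it is where the distance property of $\cC_\ell$ must be invoked to guarantee that reducing the count modulo $2$ does not lose information.

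With $\bfz$ established as a weight-$\leq h$ vector, I would finish by invoking the error-correcting property of $\cC_\comp$. By~\eqref{eq:w1}, the decoder computes $\cD_\comp(\bfw_1^A + \bfw_1^B) = \cD_\comp(H_\comp \cdot \bfz)$. Since $H_\comp$ is the parity-check matrix of a code with minimum distance $2h+1$, and $\bfz$ has weight at most $h$, the syndrome decoder $\cD_\comp$ recovers $\bfz$ uniquely and exactly. Therefore $\hat{\bfz} = \bfz = I_{H_\ell}(\cS^A \bigtriangleup \cS^B)$, as claimed. I should note in passing that the reduction $\bmod\, 2$ in the definition of $\bfz^A$ is harmless precisely because each entry of the unreduced count is already $0$ or $1$ by the injectivity argument above, so $\bfz = I_{H_\ell}(\cS^A \bigtriangleup \cS^B)$ holds over the integers and not merely modulo $2$.
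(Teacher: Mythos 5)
Your proof is correct and takes essentially the same route as the paper: by \eqref{eq:w1} the decoder's input is the syndrome of $\bfz$ under the parity-check matrix of the $[q^r,2h+1]_2$ code, and since $\wt(\bfz)\le h$ lies within its unique-decoding radius, step 1) returns $\bfz=I_{H_\ell}(\cS^A\bigtriangleup\cS^B)$. Your write-up is in fact slightly more complete than the paper's: you explicitly verify, using the minimum distance $2\ell+1$ of $\cC_\ell$ together with the $(1,h,\ell)$-set hypothesis, that distinct elements of $\cS^A\bigtriangleup\cS^B$ have distinct $H_\ell$-syndromes, so the mod-$2$ reduction is lossless and \eqref{eq:z} holds over the integers; the paper's one-line proof of the claim takes this for granted and only establishes it later, inside the proof of Theorem~\ref{th:1sets}.
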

\begin{proof} From \eqref{eq:w1}, we have $\bfw_1=\bfw_1^A+\bfw_1^B=H_\comp \cdot \bfz$. The minimum distance of $C_\comp$ is at least $2h+1$ and the weight of $\bfz$ is at most $h$. Thus
	\[\hat{\bfz}=\cD_\comp(\bfw_1)=\bfz=I_{H_\ell}(\cS^A \bigtriangleup \cS^B ).
	\] 
\end{proof}

\begin{theorem}\label{th:1sets} $\cF = \cS^A \bigtriangleup \cS^B$ when $(\cS^A, \cS^B)$ are $(1,h,\ell)$-sets. \end{theorem}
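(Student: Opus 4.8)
The plan is to follow the decoder step by step and show that each intermediate quantity equals the corresponding true object, so that the output $\cF$ recovers $\cS^A\bigtriangleup\cS^B$ exactly. Write $\cS^A\bigtriangleup\cS^B=\{\bfy_1,\dots,\bfy_v\}$. The conceptual crux, which I would establish first, is a \emph{separation property}: since $H_\ell$ is the parity-check matrix of $\cC_\ell$ with minimum distance $2\ell+1$, distinct elements of $\cS^A\bigtriangleup\cS^B$ have distinct syndromes under $H_\ell$. Indeed, if $H_\ell\cdot\bfu=H_\ell\cdot\bfw$ for $\bfu\neq\bfw$ in the symmetric difference, then $\bfu-\bfw$ is a nonzero codeword of $\cC_\ell$ of weight $d_H(\bfu,\bfw)\le\ell<2\ell+1$, a contradiction. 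Consequently $I_{H_\ell}(\cS^A\bigtriangleup\cS^B)$ is the $0/1$ indicator of the $v$ distinct syndrome positions, and by \Cref{cl:introcl} the positions $\{k_1,\dots,k_v\}$ of the ones of $\hat{\bfz}$ are exactly these syndromes. After relabeling we may assume $H_\ell\cdot\bfy_j=\alpha^{k_j}$ for each $j$.

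Next I would verify steps 3)--5). For step 3), $\alpha^{k_1}+\alpha^{k_i}=H_\ell\cdot(\bfy_1+\bfy_i)$ (the field has characteristic $2$), and $\wt(\bfy_1+\bfy_i)=d_H(\bfy_1,\bfy_i)\le\ell$, so the bounded-distance decoder returns $\hat{\bfe}_i=\bfy_1+\bfy_i$ exactly. Substituting the combined form \eqref{eq:w2} of $\bfw_2=\bfw_2^A+\bfw_2^B$ together with $\hat{\bfe}_i=\bfy_1+\bfy_i$ into step 4), the terms $b_{k_i}\bar H_\ell\cdot\bfy_i$ with $i\ge 2$ each occur twice and cancel over $\mathbb{F}_{q^{n-r}}$, leaving $\bfz'=\bigl(\sum_{i=1}^{v}b_{k_i}\bigr)\bar H_\ell\cdot\bfy_1$. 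Because $\bfb$ is a $B_h$ sequence and the indices $k_1,\dots,k_v$ are distinct with $v\le h$, the scalar $\sum_{i=1}^{v}b_{k_i}$ is nonzero, so step 5) yields $\bfs_2=\bar H_\ell\cdot\bfy_1$.

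Finally, since $(\alpha^{k_1},\bfs_2)^T=(H_\ell\cdot\bfy_1,\ \bar H_\ell\cdot\bfy_1)^T=H_F\cdot\bfy_1$ and $H_F$ is invertible, step 6) gives $\hat{\bfx}=\bfy_1$, and then $\hat{\bfx}+\hat{\bfe}_i=\bfy_1+(\bfy_1+\bfy_i)=\bfy_i$, so $\cF=\{\bfy_1,\dots,\bfy_v\}=\cS^A\bigtriangleup\cS^B$. The degenerate case $\hat{\bfz}=\mathbf 0$ forces $\cS^A\bigtriangleup\cS^B=\emptyset$ and is handled by step 2). I expect the main obstacle to be the bookkeeping in steps 4)--5): isolating $\bfy_1$ relies on both the characteristic-$2$ cancellation of the $i\ge2$ terms and the guarantee that $\sum_i b_{k_i}\neq0$, the latter being exactly where the $B_h$ property of $\bfb$ is needed. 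The separation property is the only genuinely new idea; once distinct syndromes are in hand, the remaining identifications are routine algebra.
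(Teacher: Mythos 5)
Your proposal is correct and follows essentially the same route as the paper's proof: both verify the decoder step by step, using the minimum distance $2\ell+1$ of $\cC_\ell$ to guarantee distinct syndromes for the elements of $\cS^A\bigtriangleup\cS^B$, bounded-distance decoding to recover the differences $\hat{\bfe}_i$, the characteristic-$2$ cancellation plus the $B_h$ property to isolate $\bar H_\ell\cdot\bfx$ in steps 4)--5), and the full rank of $H_F$ to recover $\bfx$. Your explicit statement of the separation property is a slightly cleaner packaging of what the paper argues via the distinctness of the error patterns $\bfe_i$, but the substance is identical.
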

\begin{proof} Suppose $\cS^A \bigtriangleup \cS^B = \{ \bfx_{1}, \ldots, \bfx_{T} \},$ where $T \leq h$. Since $(\cS^A, \cS^B)$ are $(1, h, \ell)$-sets, we can write 
\begin{align}\label{eq:setdiffl1}
\cS^A \bigtriangleup \cS^B=\{ \bfx, \bfx+\bfe_2, \ldots, \bfx + \bfe_{T}  \}
\end{align}
where $\bfx$ is any element in $\cS^A \bigtriangleup \cS^B$ and for $2 \leq i \leq T$, $\wt(\bfe_i) \leq \ell$. Furthermore, since $|\cS^A \bigtriangleup \cS^B |=T$, for any distinct $2 \leq i,j \leq T$, we have $\bfe_i \neq \bfe_j$. 

Let \(k_1=\log_\alpha (H_\ell\cdot\bfx)\) so that $H_\ell\cdot\bfx=\alpha^{k_1}.$ Suppose that $\bfx = \bfc + \bfe$ for some $\bfc \in \cC_\ell$ such that $\wt(\bfe)$ is minimized (i.e., there does not exist some other $\bfc' \in \cC_\ell$ where $\wt(\bfc' + \bfx) < \wt(\bfe)$). Then, we have
\[ H_\ell \cdot (\cS^A \bigtriangleup \cS^B)  = \{ H_\ell \cdot \bfe, H_\ell \cdot ( \bfe + \bfe_2), \ldots, H_\ell \cdot ( \bfe + \bfe_{T})  \}.\]

Recall from the previous discussion that $\bfe_i$ are distinct, nonzero, and have weight at most $\ell$, for $2\le i\le T$. It follows that 
%for any $j\neq i$, $H_\ell \cdot \bfe_i \neq H_\ell \cdot \bfe_j$. Thus, 
$|H_\ell \cdot (\cS^A \bigtriangleup \cS^B)| = | \cS^A \bigtriangleup \cS^B | = T$. For $2\le i\le T$, let $k_i = \log_\alpha(H_\ell\cdot(\bfe+\bfe_i))$.
From Claim~\ref{cl:introcl}, we have that at step 2) $\hat{\bfz}$ is non-zero in positions $\alpha^{k_1}, \alpha^{k_2}, \ldots, \alpha^{k_{v}}$, where $v=T$.
% where
%\begin{align}\label{eq:setdiffl3}
%H_\ell \cdot (\cS^A \bigtriangleup \cS^B) =  \{ \alpha^{k_1}, \alpha^{k_2}, \ldots, \alpha^{k_{v}} \}
%\end{align}
%and $v=k$. Suppose that $\alpha^{k_1} = H_\ell \cdot \bfx, \alpha^{k_2} = H_\ell \cdot (\bfx + \bfe_2), \alpha^{k_3} = H_\ell \cdot (\bfx + \bfe_3)$, and so on.

Notice that if $\hat{\bfz}=\bf0$, then $\cS^A \bigtriangleup \cS^B = \emptyset$. Since $H_\ell \cdot \bfx = \alpha^{k_1}$ and $\cD_\ell$ can correct up to $\ell$ errors, at step 3) of the decoding $\hat{\bfe}_2 = \cD_\ell(\alpha^{k_1} + \alpha^{k_2}) = \cD_\ell(H_\ell \cdot \bfx + H_\ell \cdot ( \bfx + \bfe_2)) = \cD_\ell(H_\ell \cdot \bfe_2) = \bfe_2$ and similarly $\hat{\bfe}_3 = \bfe_3, \ldots, \hat{\bfe}_v = \bfe_v$. 

From \eqref{eq:w2}, 
\begin{align*}
\bfw_2&=\bfw_2^A + \bfw_2^B =b_{k_1} \cdot \bar{H}_\ell \cdot \bfx + \sum_{i=2}^v b_{k_i} \cdot \bar{H}_\ell \cdot( \bfx + \bfe_i)\\ 
&= \left( \sum_{i=1}^v b_{k_i} \right) \cdot \bar{H}_\ell \cdot \bfx + \sum_{i=2}^v b_{k_i} \cdot \bar{H}_\ell \cdot \bfe_i.
%\sum_{i=1}^{2^r} \alpha^i \cdot \sum_{j=1}^{|\cS^B_{H,i}|} \bar{H}_\ell \cdot \cS^B_{H,i,j} + \sum_{i=1}^{2^r} \alpha^i \cdot \sum_{j=1}^{|\cS^A_{H,i}|} \bar{H}_\ell \cdot \cS^A_{H,i,j}\label{eq:lem11} \\
%&=\sum_{i=1}^{2^r} \alpha^i \cdot \psi\left(\bar{H}_\ell \cdot \sum \cS^B_{H,i} + \bar{H}_\ell \cdot \sum \cS^A_{H,i} \right)\\
%&=b_{k_1} \cdot \bar{H}_\ell \cdot \bfx  + \beta^{k_2} \cdot \bar{H}_\ell \cdot (\bfx + \bfe) \\
%&=(\beta^{k_1} + \beta^{k_2} ) \cdot \bar{H}_\ell \cdot \bfx + \beta^{k_2} \cdot \bar{H}_\ell \cdot \bfe.
\end{align*}
Then, at step 4) of the decoding we have 
\begin{align}\label{eq:zbart=1}
\bfz' &= \bfw_2^A + \bfw_2^B + \sum_{i=2}^{v} b_{k_i} \cdot \bar{H}_\ell \cdot \hat{\bfe}_i =  \left( \sum_{i=1}^v b_{k_i} \right) \cdot \bar{H}_\ell \cdot \bfx.
\end{align}
Since $T=v \leq h$ and $\bfb=(b_1, b_2, \ldots, b_m)$ is a $B_{h}$ sequence, we have $\bfs_2=\bar{H}_\ell \cdot \bfx$ at step 5) of the decoding. Then, at step 6), $\hat{\bfx} = \bfx$ since $\alpha^{k_1} = H_\ell \cdot \bfx$ and $H_{F}$ has full rank. At step 7), since $\hat{\bfe}_2 = \bfe_2, \hat{\bfe}_3 = \bfe_3, \ldots, \hat{\bfe}_T = \bfe_T$, we have $\cF = \cS^A \bigtriangleup \cS^B$ as desired.
\end{proof}

We note that $(\bfw_1, \bfw_2)$ requires approximately $u + \lg q^{n-r} = u + (n-r)\lg q$ bits where $u$ is the dimension of the parity check matrix for $H_\comp $. If we approximate $u = hr \lg q$ and $r= \ell (\lg n + \lg q)$, then $(\bfw_1, \bfw_2)$ requires approximately $\lg q \left( n +  (h-1) \ell (\lg n + \lg q ) \right)$ bits of information. If $q=2$, $(\bfw_1, \bfw_2)$ requires approximately 
\begin{align}\label{eq:lower1}
n + (h-1) \ell (\lg n + 1).
\end{align}
If the approach from \cite{MT02} were used then at least $h (n + 1)$ bits of information exchange would be required which is significantly more than the quantity in (\ref{eq:lower1}). {Nevertheless, the upper bound on information exchange given in Theorem~\ref{th:asmptbnds} for $t=1$ is $\le 2h\ell(H_q(\lambda)/\lambda)$ which for a constant $\lambda$ is asymtotically smaller than the information exchange in (\ref{eq:lower1}), and so achieving optimality is still an open problem.}

Note that the basic approach taken in this section was to first determine the differences between elements in the symmetric difference. Then, the idea was to use those differences so that at step 4) of the decoding, we produced a $\bfz'$ which is basically a scaled version of $\bfs_2$. Note that we can obtain $\bfz'$, as shown in (\ref{eq:zbart=1}), by multiplying $\bfb$ (which is a $B_h$ sequence) by a vector, say $\bfu$ with at most $h$ identical non-zero components so that $\bfu$ has rank at most $1$. In the next section, we extend this idea by showing how, given the relationship between elements in the symmetric difference which are close to each other, we can recover the symmetric difference for certain classes of $(t,h,\ell)$-sets by solving for a low-rank vector.

\section{Reconciling Certain Classes of $(t,h,\ell)$-sets}\label{sec:t>1}
\newcommand{\minusI}{\bar{I}}
\newcommand{\Cen}{\mathsf{Cen}}
\newcommand{\rk}{\operatorname{rk}}
\newcommand{\ball}{\cB}

In this section, we will detail an approach that addresses the case where $t \geq 1$. First we give an overview of our method with a fair amount of detail, but postpone the proofs until later in the section. In \Cref{ssc:t>1Enc} and  \Cref{ssc:t>1Dec}, we formally present the encoding and decoding algorithms.

Let us now fix some notation. For a set $I \subseteq [n]$, and a vector $\bfx \in \mathbb{F}_2^n$, let $\bfx_I$ denote the vector that results by discarding the components of $\bfx$ outside $I$. For example, if $\bfx = (1,0,1,0)$, then $\bfx_{\{1,3\}} = (1,1)$. For a set of vectors $S \subseteq \mathbb{F}_2^n$, let $S_I$ denote the set of vectors that results from discarding the components of each element in $S$ outside $I$. 

The aim of this section is to describe an approach to synchronize two sets of data $\cS^A$ and $\cS^B$ where the symmetric difference between the sets has the following structure. For given $t$, $h$, and $\ell$,
$
 \cS^A \bigtriangleup \cS^B =\bigcup_{i=1}^{j} {\ball_i}, 
$
where $\ball_i=\{ \bfx_{i,1}, \ldots, \bfx_{i,k_i} \}$ and the following hold:
\begin{enumerate}
\item $j \leq t$; 
\item for $1 \leq i \leq j$, $k_i \leq h$;
%\item $|\cS^A \bigtriangleup \cS^B| = 2m \leq t$, 
\item for any $\bfu, \bfw \in \ball_i$, we have $d_H(\bfu, \bfw) \leq \ell$; and
\item $\exists I \subseteq [n]$ such that:
\begin{enumerate}
\item  For any $ i_1\neq i_2$ and any $\bfx\in\ball_{i_1}, \bfy\in\ball_{i_2}$, we have $\bfx_I \neq \bfy_I$; and
\item For any $\bfx, \bfy \in \ball_i$, we have $\bfx_I = \bfy_I$.
\end{enumerate}
\end{enumerate}
Each set $\cB_i$ is referred to as a {\textit{\textbf{(difference) block}}}.

If the sets $\cS^A$ and $\cS^B$ satisfy the conditions $1) -4)$, then (with a slight abuse of notation) $\cS^A$, $\cS^B$ are called $(t,h,\ell)$-sets. For the remainder of the paper, we assume $(t,h,\ell)$-sets are as defined in this section and not in Definition~\ref{def:thlset}. 

We note that our original definition for $(t,h,\ell)$-sets in~Definition~\ref{def:thlset} was for general $q$-ary strings, and it did not require condition 4). We chose to focus on binary strings for simplicity of presentation, but the ideas extend to the more general case. The reason for including condition 4) here is that it is used during the encoding/decoding to group the elements in $S^A \bigtriangleup S^B$. For the case where $t=1$, which was considered in the previous section, there exists at most one difference block and so no grouping was required. 

%It remains an open problem on how to design a transmission scheme without condition 4) for $t > 1$.

Assuming that the elements of the symmetric difference are chosen at random but with the constraint that conditions 1)-3) are satisfied, a simple argument shows that condition 4) is violated with probability at most
\[\frac{th^2\ell|I|}{n}+\frac{t^2h^2}{2^{|I|}}\]
which approaches 0 for $|I|=\lg(n)$ provided that $t^2h^2\ell=o(n/\lg n)$. Furthermore, given the setup where database documents are being synchronized, the set $I$ could be derived from the document's unique identifier for instance. Thus, the $(t,h,\ell)$-sets considered here could arise in several different ways. 

Before continuing, we revisit the example from Section~\ref{sec:model} of a $(t,h,\ell)$-set {in the context of the new definition}.
  
\begin{example} {Suppose $\cS^A, \cS^B \in \mathbb{F}_2^5$, where 
\begin{equation*}
\begin{split}
\cS^A&=\{(0,0,0,0,0), (1,0,1,1,1)\},\\
\cS^B&=\{(0,0,0,0,0), (1,1,0,0,1)\}. 
\end{split}
\end{equation*}
Then we say that $(\cS^A, \cS^B)$ are $(1,2,3)$-sets since
\[\cS^A \bigtriangleup \cS^B=\{ (1,\textcolor{red}{0},\textcolor{red}{1},\textcolor{red}{1},1),(1,\textcolor{red}{1},\textcolor{red}{0},\textcolor{red}{0},1) \},\] which can be decomposed into $1$ set of size $2$ whereby the Hamming distance between any two elements is at most $3$.} Notice here that $I=\{1,5\}$.  \end{example}

Next, we give an overview of our method with a fair amount of detail, and postpone the formal presentation of the encoding and decoding algorithms to \Cref{ssc:t>1Enc,ssc:t>1Dec}, respectively. Similar to the algorithm for $(1,h,\ell)$-sets from Section~\ref{sec:r1sets}, and as discussed previously, the process of synchronizing $(t,h,\ell)$-sets will be broken down into $2$ main stages:
\begin{itemize}
\item[]\hspace{-1.4em} Stage 1) Determine the differences between the elements in the symmetric difference.
\item[]\hspace{-1.4em} Stage 2) Recover the elements in the symmetric difference.
\end{itemize}

%In order to accomplish 1), we will require an error-correcting code to identify regions of the space $\gf(2)^n$ where the covering radius of the code is $\lfloor \frac{\ell}{2} \rfloor$. Let $\cC_{lab} \subseteq \mathbb{F}_2^n$ be such a code where the subscript $lab$ refers to the fact that we will use this code to ``label'' the space $\mathbb{F}_2^n$. For a vector $\bfv \in \gf(2)^n$, let $\cB_\ell(\bfv)$ refer to the set of vectors possible given that at most $\ell$ components of $\bfv$ may have their values changed. Since the covering radius of the code $\cC_{lab}$ is $\lfloor \frac{\ell}{2} \rfloor$, we can write any vector $\bfz \in \gf(2)^n$ uniquely as $\bfz = \bfx + \bfe$ where $wt(\bfe) \leq \lfloor \frac{\ell}{2} \rfloor$. Furthermore, if two vectors $\bfy_1, \bfy_2 \in \gf(2)^n$ are such that $\bfy_1 \in \cB_{ \lfloor \frac{\ell}{2} \rfloor }(\bfx)$ and $\bfy_2 \in \cB_{ \lfloor \frac{\ell}{2} \rfloor }(\bfx)$, then $d_H(\bfy_1, \bfy_2) \leq \ell$.

%We first consider stage 1). The main idea behind the first stage of the encoding/decoding is to use two functions (to be defined later) $f:\gf(2)^{|I|}\to\gf(Q)$ and $M:\gf(2^n)\to[N]$ to represent elements (at most $t \cdot h$) in the symmetric difference and, given this representation to determine the differences between the elements in the symmetric difference. 

Notice that under our setup, there exists a set $\Cen \subseteq \mathbb{F}_2^n$ of size at most $t$ containing one element $\bfc_i$ from each difference block $\ball_i$ such that for each $i$ and any $\bfy \in \cB_i$, we have $d_H(\bfy,\bfc_i)\le\ell$.
%$\bfy \in \bigtriangleup \cS^B$, $\exists \bfx \in \Cen$ we have $$ d_H(\bfx, \bfy) \leq \ell. $$
We refer to the set $\Cen$ as the \textit{\textbf{center set}} and to each $\bfc_i$ as a \textit{\textbf{block center}}.

%Furthermore, we say that the two elements $\bfv_1, \bfv_2$ are in the same \textit{\textbf{difference block}} \tr{($\ball_i$?)} \textcolor{blue}{Yeah, it's a good idea to label the sets for later.} if $ d_H(\bfv_1, \bfv_2) \leq \ell$ and $\exists \bfu \in \Cen$ where both $d_H(\bfv_1, \bfu) \leq \ell$ and $d_H(\bfv_2, \bfu) \leq \ell$. 

Our goal during Stage 1) will be to recover the differences between the elements in each block. To this end, we represent the information in the sets as length-$N$ vectors over $\mathbb{F}_Q$, denoted $\bfz_{1} = (z_{1,1}, \ldots, z_{1,N}) \in \mathbb{F}_Q^N$,  where $\mathbb{F}_Q$ has characteristic two. The values of $Q$ and $N$ are chosen to ensure the existence of three maps 
\begin{align*}
M &: \mathbb{F}_2^n \to [N],\\
E &: [N] \times [N] \to \mathbb{F}_2^n, \\
f &: \mathbb{F}_2^{|I|} \to \mathbb{F}_Q
\end{align*}
with certain properties that will be described shortly. Both maps are also used in the second stage of synchronization.

The map $M$ is a function that will be used to assign to each $\bfx\in\cS$ (where $\cS=\cS^A$ or $\cS=\cS^B$) a position in $\bfz_1$, that is, $M(\bfx)\in[N]$. This assignment satisfies the following property.
\begin{property}\label{prpr:M}
	The map $M$ is such that if $\bfx_1, \bfx_2 \in \cB_i$ for some $i \in [t]$, then $\bfx_1$ and $\bfx_2$ are mapped to different positions, i.e., 
	\begin{equation}\label{eq:dist}
	M(\bfx_1)\neq M(\bfx_2).
	\end{equation}
\end{property} 
As a result, no two elements belonging to the same difference block are mapped to the same position. 

The map $E$, which will be useful for determining the differences between elements in the symmetric difference, has the following property.
\begin{property}\label{prpr:E}
	The map $E$ is such that if $\bfx_1, \bfx_2 \in \cB_i$ for some $i \in [t]$, then 
	\begin{equation}\label{eq:emap}
	E(M(\bfx_1), M(\bfx_2)) = \bfx_1 + \bfx_2.
	\end{equation}
\end{property}

We now turn to discussing the map $f$. For now, we assume this map has the following property. In \Cref{ssc:t>1Enc}, we show how to construct such maps.
\begin{property}\label{prpr:f}
The map $f$ is an \emph{invertible} function such that for any $ \cX \subseteq \mathbb{F}_2^{|I|}, |\cX| \leq 2t$, we have
\begin{align}\label{eq:propFMap}\vspace{-.5ex}
\sum_{\bfx \in \cX} f (\bfx) \neq 0.\vspace{-.5ex}
\end{align}
\end{property}

For a subset $\mathcal{S} \subseteq \mathbb{F}_2^n$ (in particular $\cS=\cS^A$ or $\mathcal{S}=\cS^B$), let
$$\cS_{M,i} = \{ \bfx \in \mathcal{S} : M( \bfx) = i \}.$$ 
The vector $\bfz_1=(z_{1,j})_{j\in[N]}$ is defined as 
\begin{equation}\label{eq:z1}\vspace{-.5ex}
z_{1,j} = \sum_{\bfx \in \cS_{M,j} }   f(\bfx_I).\vspace{-.5ex}
\end{equation}
%
% 
%In a manner analogous to the approach for $(1,h,\ell)$-sets, the parameter $N$ is chosen so that given a pair of distinct elements $\bfx, \bfy \in \cS^A \bigtriangleup \cS^B$, we can determine $\bfx + \bfy$ based on the locations of where $\bfx, \bfy$  are hashed to in our vectors.
%For the case of $t=1$, there is  one difference block and the center set has size one.  As a result, it can be shown that any position in the length $N$ vector can have at most one element mapped to it under the function $M$. However, for the current setup where $t\geq 1$, there may be multiple elements (at most $t$) mapped to the same position in $\bfz_1$. 
%The alphabet size $Q$ is chosen so that for any position in $\bfz_1$, 
The result of Property~\ref{prpr:f} is that for $d \leq t$ and $\bfx^{(1)}_I, \bfx^{(2)}_I, \ldots, \bfx^{(d)}_I\in \cS_{M,j}$,  we can  recover $\bfx^{(1)}_I,\bfx^{(2)}_I, \ldots, \bfx^{(d)}_I$ from their sum $z_{1,j}=\sum_{\bfx \in \cS_{M,j} }   f(\bfx_I)$. 

{Let $\bfz_1^A$ and $\bfz_1^B$ be the result of computing $\bfz_1$ according to~\eqref{eq:z1} on Host $A$ and Host $B$, respectively. Furthermore, let $\dot{\bfz}=\bfz_1^A+\bfz_1^B$. Each host transmits a compressed version of its $\bfz_1$ vector to the other one and so each can then compute $\dot{\bfz}$. The effect of elements in $\cS^A\cap\cS^B$ are canceled out in $\dot{\bfz}$ since they contribute to both $\bfz_1^A$ and $\bfz_1^B$ and $\mathbb{F}_Q$ is an extension field of $\mathbb{F}_2$. Hence,
\begin{equation}\label{eq:zdot}\vspace{-.5ex}
\dot{z}_{j}= \sum_{\bfx \in (\cS^A\bigtriangleup\cS^B)_{M,j} }   f(\bfx_I).\vspace{-.5ex}
\end{equation}}

{Given $\dot\bfz$, from the discussion following~\eqref{eq:z1} and the invertibility of $f$, for each $i$ we can recover the set $\Big \{M(\bfx):\bfx\in\ball_i \Big \}$. Using this information, we can then identify the differences between the elements of each $\ball_i$ using the map $E$, which is the goal of Stage 1). 
{So based on the preceding discussion, from $\dot{\bfz}$, we can find $\bfx_I$ and $f(\bfx_I)$ for each $\bfx\in\cS^A\bigtriangleup\cS^B$, the number of difference blocks $\ball$, the number of elements in each block, and the differences between any two elements in each block.}
%For the first stage of the encoding, we encode the differences between elements and the center set using the vector $\bfz_1$. Suppose, for instance, the encoding is taking place on Host $A$. First, we would initialize $\bfz_1 = (0,\ldots, 0)$. Then, for every $\bfx \in \cS^A$, we would update $\bfz_1$ so that $z_{1,M(\bfx)} = z_{1,M(\bfx)} + f (\bfx_I)$.
 
As mentioned earlier, the hosts do not transmit $\bfz_1^A$ and $\bfz_1^B$ but rather a compressed version of these vectors. Let 
\begin{align}\label{eq:HCr}
H_{\comp} \in \mathbb{F}_Q^{r \times N}
\end{align}
be a parity check matrix for a $[N, 2th + 1]_Q$ code, denoted $\cC_\comp$. Each host computes $\bfw_1 = H_\comp\cdot \bfz_1$ (resulting in $\bfw_1^A$ and $\bfw_1^B$) and transmits it to the other host. So each host can compute $\dot \bfw=\bfw_1^A+\bfw_1^B$. Note that $\dot\bfw=H_\comp \cdot\dot{\bfz}$. Since $\wt(\dot{\bfz})\le th$, the hosts can find $\dot{\bfz}$ using a decoder $\cD_\comp$ for the code $H_\comp$.

For Stage 2), the idea will be to use the differences between a center set and the remaining elements to encode (and subsequently decode) the elements in the center set only. During the decoding, we will produce the symmetric difference given knowledge of a center set and the differences. 

In this stage, we represent our information using the vectors 
\begin{align*}
\bfz^{(0)}_2 &= (z^{(0)}_{2,1}, \ldots, z^{(0)}_{2,N}),\\
%\bfz^{(1)}_2 &= (z^{(1)}_{2,1}, \ldots, z^{(1)}_{2,N}),\\
&\vdots \\
\bfz^{(t-1)}_2 &= (z^{(t-1)}_{2,1}, \ldots, z^{(t-1)}_{2,N})  \in (\mathbb{F}_2^{n-|I|})^N.
\end{align*}  

For shorthand, let $\bar{n} = n - |I|$ and $\minusI = [n] \setminus I$, so that $\bfx_{\minusI}=\bfx_{\left([n]\setminus I\right)}$. Suppose, as before, we are encoding the set $\mathcal{S}$, where $\mathcal{S}=\cS^A$ or $\cS^B$. We let 
\[z^{(k)}_{2,j} = \sum_{\bfx \in \cS_{M,j}} \left(f(\bfx_I)\right)^{2^k} \cdot \bfx_{\minusI}\]
for $k \in \{0,1,\ldots, t-1\}$.
%First, we initialize $\bfz^{(k)}_2 = (0, \ldots, 0)$ for $k \in \{0,1,\ldots, t-1\}$. Then for every $\bfx \in \cS^A$, for $k \in \{0,1,\ldots,t-1\}$, we update $\bfz^{(k)}_2$ so that $z^{(k)}_{2,M(\bfx)} = z^{(k)}_{2,M(\bfx)} + f (\bfx_I)^{2^k} \cdot \bfx_{\minusI}$. 
For this stage we implicitly make use of a bijection between $\mathbb{F}_2^{\bar{n}}$ and $\mathbb{F}_{2^{\bar{n}}}$. {Further, we assume $\mathbb{F}_Q \subseteq \mathbb{F}_{2^{\bar{n}}}$.}

We need another matrix to fully describe the encoding process. Let 
\begin{align}\label{eq:HFmatrix}
H_F \in \mathbb{F}_R^{t \times N},
\end{align}
where $R \geq N^{(2^t-1)h}$, such that the following property holds. 

\begin{property}\label{prpr:solvesys} {For any submatrix $H_F'$ of $H_F$, consisting of any $c\le th$ nonzero columns from $H_F$, and for any $\bfs \in \mathbb{F}_{2^{\bar{n}}}^t$, there exists at most one choice of a vector $\bfv\in \mathbb{F}_{2^{\bar n}}^c$ over $\mathbb{F}_{2^{\bar{n}}}$ with  $\rk(\bfv) \leq t$ that satisfies}\vspace{-.5ex}
\begin{equation}\label{eq:solvesys}
H_F' \cdot \bfv = \bfs.\vspace{-.5ex}
\end{equation}
Here $\rk(\bfv)$ denotes the rank of $\bfv$ over $\mathbb{F}_2$ if $\bfv$ is interpreted as an $\bar n\times c$ matrix.
\end{property}

Given the $t \times N$ matrix $H_F$, Host $A$ constructs
\[\bfw_2^A = (\bfw_2^{A,(0)}, \ldots, \bfw_2^{A,(t-1)})\]
where $\bfw_2^{A,(k)}= H_F \cdot \bfz^{A,(k)}_2$,
and transmits it to Host $B$.

We now turn to describe decoding in Stage~2). For clarity of presentation, we assume that the vector $\dot{\bfz}=\bfz_1^A+\bfz_1^B$ from Stage~1) has the following nonzero elements:
\begin{equation*}
\begin{split}
\dot\bfz_{j_1} &= \bfsigma_1,\\
\dot\bfz_{j_2} &= \bfsigma_1+\bfsigma_2,\\
\dot\bfz_{j_3} &= \bfsigma_2,\\
\end{split}
\end{equation*}
where $\bfsigma_1,\bfsigma_2\in \mathbb{F}_2^{|I|}$. Notice that under this setup, $t=2$ and $h=2$, so that the symmetric difference consists of two blocks, $\ball_1$ and $\ball_2$, each with two elements. Without loss of generality, suppose that $\ball_1=\{X,X+\bfe_1\}$ and $\ball_2=\{Y,Y+{\bfe_2} \}$, where
\begin{align*}
f(X_I)&=\bfsigma_1, &M(X)&=j_1,\\
f((X+\bfe_1)_I)&=\bfsigma_1, &M(X+\bfe_1)&=j_2,\\
f(Y_I)&=\bfsigma_2, &M(Y)&=j_2,\\
f((Y+\bfe_2)_I)&=\bfsigma_2, &M(Y+\bfe_2)&=j_3,
\end{align*}
and where $\wt(\bfe_1)\le\ell$ and $\wt(\bfe_2)\le\ell$. Also note that $(X+\bfe_1)_I=X_I$ and $(Y+\bfe_2)_I=Y_I$. At this point, we still do not know the values of $X$ and $Y$, but from Stage~1) of the decoding we know the values of $\bfe_1$ and $\bfe_2$:
\begin{equation}
\bfe_1=E(j_1,j_2),\quad\bfe_2=E({j_2},j_3).
\end{equation}

Let $\ddot{\bfz}^{(k)}=\bfz_2^{A,(k)}+\bfz_2^{B,(k)}$ and $\ddot{\bfw}^{(k)}=\bfw_2^{A,(k)}+\bfw_2^{B,(k)}$. When decoding, each node can compute $\ddot{\bfw}^{(k)}$, which equals \(H_F \cdot \ddot{\bfz}^{(k)}\).
%\begin{align*}
%\ddot{\bfw}^{(k)}&=H_F \cdot \ddot{\bfz}^{(k)}\\
%& = H_F \cdot (\bfz^{A,(k)}_{2} + \bfz^{B,(k)}_{2})\\
%& = H_F \cdot (\ddot{z}^{(k)}_{1}, \ldots, \ddot{z}^{(k)}_{N})
%\end{align*}

Because we have mapped the elements in $\cS^A \bigtriangleup \cS^B$ to the same locations in both {$\dot{\bfz}$} and $\ddot{\bfz}^{(k)}$ using the map $M$, we know {$\ddot{\bfz}^{(k)}$} has the following nonzero elements
\begin{equation*}
\begin{split}
{\ddot\bfz_{j_1}^{(k)}} &= \bfsigma_1^{2^k}\cdot X_{\minusI},\\
{\ddot\bfz_{j_2}^{(k)}} &= \bfsigma_1^{2^k}\cdot(X+\bfe_1)_{\minusI}+\bfsigma_2^{2^k}\cdot Y_{\minusI},\\
{\ddot\bfz_{j_3}^{(k)}} &= \bfsigma_2^{2^k}\cdot(Y+\bfe_2)_{\minusI},\\
\end{split}
\end{equation*}
for any $k$. At this point, we still do not know the values of $X$ and $Y$ but do know the values of $\bfe_1$ and $\bfe_2$.

The rank $\rk(\ddot{\bfz}^{(k)})$ of $\ddot{\bfz}^{(k)}$ is at most $3$. Our goal now is to decrease the rank of this vector to at most $t=2$ so that we can use Property~\ref{prpr:solvesys}. To do so, from each block $\ball_i$, we arbitrarily pick an element as the block center and as described below, we change every other appearance of an element of $\ball_i$ in $\ddot{\bfz}^{(k)}$ to look like the block center. In our current illustration, we pick the element of $\ball_1$ mapped to $j_1$ and the element of $\ball_2$ mapped to $j_2$ as their respective centers, which we have named $X$ and $Y$.

Let $\bfu \in \mathbb{F}_{2^{\bar{n}}}^N$ be the all-zero vector except in position $j_2$, where $u_{j_2} = \bfsigma_1^{2^k} \cdot (\bfe_1)_{\minusI}$. Notice here that we again implicitly use a bijective mapping between $\mathbb{F}_{2^{\bar{n}}}$ and $\mathbb{F}_2^{\bar{n}}$. We initialize $\hat{S}^{(k)}=\ddot{\bfw}^{(k)}=\bfw_2^{A,(k)}+ {\bfw_2^{B,(k)}}$ and update it by adding $H_F \cdot \bfu$ to it: 
\begin{align*}
\hat{S}^{(k)} &\leftarrow {\ddot{\bfw}^{(k)}} + H_F \cdot \bfu \\
&= H_F \cdot ( \ddot{\bfz}^{(k)} + \bfu). \vspace{-.5ex}
\end{align*}
Note that the $j_2$th position of $\ddot{\bfz}^{(k)} + \bfu$, denoted $(\ddot{\bfz}^{(k)} + \bfu)_{j_2}$, is 
\begin{align*}\vspace{-.5ex}
(\ddot{\bfz}^{(k)} + \bfu)_{j_2} &= \ddot{z}^{(k)}_{j_2} + u_{j_2} \\
&= \Big ( \bfsigma_1^{2^k} \cdot (X + \bfe_1)_{\minusI} + \bfsigma_2^{2^k}\cdot Y_{\minusI}  \Big) + 
%\\&\phantom{{}={}}
\Big ( \bfsigma_1^{\mathclap{2^k}} \cdot (\bfe_1)_{\minusI} \Big )\\
&= \bfsigma_1^{2^k} \cdot X_{\minusI} + \bfsigma_2^{2^k}\cdot Y_{\minusI}.\vspace{-.5ex}
\end{align*}
So now both elements of $\ball_1$ contribute a term of the form $ \bfsigma_1^{2^k} \cdot X_{\minusI}$.

We update $\hat{S}^{(k)}$ again by letting $\hat{S}^{(k)}\leftarrow \hat{S}^{(k)} + H_F\cdot\bfu'$, where $u'_{j_3}=\bfsigma_2^{2^k}\cdot(E(j_2,j_3))_{\bar I}$, so that \vspace{-.5ex}
\begin{align*}
(\ddot{\bfz}^{(k)} + \bfu')_{j_3} &= \bfsigma_2^{2^k} \cdot Y
_{\minusI}. \vspace{-.5ex}
\end{align*}
and that
$$\vspace{-.5ex} \hat{S}^{(k)}= H_F \cdot V^{(k)}, \vspace{-.5ex}$$
where the non-zero entries in $V^{(k)}$ are contained within the set
\begin{align*}\vspace{-.5ex}
U^{(k)}= \{ \bfsigma_1^{2^k} \cdot X_{\minusI}+\bfsigma_2^{2^k} \cdot Y_{\minusI},\  \bfsigma_1^{2^k} \cdot X_{\minusI},\ \bfsigma_2^{2^k} \cdot Y_{\minusI}  \}. \vspace{-.5ex}
\end{align*}

Notice that $\rk(V^{(k)})\le2$ whereas $\rk(\ddot{\bfz})\le3$. Thus it is possible now to use Property~\ref{prpr:solvesys} {to recover $V^{(k)}$ for $k \in \{0,1,2,\ldots,t-1\}$ given that $t=2$ and $h = 2$.  In particular, given:}\vspace{-.5ex}
{\begin{align*}
\bfsigma_1^{2^k}  \cdot X_{\minusI}+\bfsigma_2^{2^k} \cdot Y_{\minusI} \vspace{-.5ex}
\end{align*}}{for $k \in \{0,1\}$ along with knowledge of $\bfsigma_1, \bfsigma_2$ (which we recovered from the first stage of the decoding using the vector $\dot{\bfz}$), we can recover $X_{\minusI}$ and $Y_{\minusI}$, which allows us to determine the center set $\{X, Y\}$. Then, with $X,Y$ and $\dot{\bfz}=\bfz_1^A+\bfz_1^B$ can recover $\ball_1=\{X,X+\bfe_1\}$ and $\ball_2=\{Y,Y+\bfe_2 \}$, so we are able to reconstruct the set $\cS^A \bigtriangleup \cS^B$. }

\subsection{Encoding}\label{ssc:t>1Enc}

In this section, we formally state the encoding algorithm. We present the encoding procedure for Host $A$, but the same applies to Host $B$ as well.

Recall that for a set $S$, we let $S_{M,i} = \{ \bfx \in S : M( \bfx) = i \}$ and that $H_{\comp}$ from (\ref{eq:HCr}) is a parity check matrix for a $[N, 2th+1]_Q$ code, denoted $\cC_{comp}$. Furthermore $H_F$ is as described in (\ref{eq:HFmatrix}). The encoding is as follows.

\begin{enumerate}
	\item Let $\bfz_1^A = (z_{1,1}^A, z_{1,2}^A, \ldots, z_{1,N}^A) \in \mathbb{F}_Q^N$ with
	\[z_{1,j}^A = \sum_{\bfx \in \cS_{M,j}^A }   f(\bfx_I), \]
	and set
	\[\bfw_1^A = H_{\comp} \cdot \bfz_1^A.\]
	\item Let $\bfz^{A,(k)}_2 =(z^{A,(k)}_{2,1}, z^{A,(k)}_{2,2}, \ldots, z^{A,(k)}_{2,N}) \in \mathbb{F}_{2^{\bar{n}}}^N$ with \[z^{A,(k)}_{2,j} = \sum_{\bfx \in \cS_{M,j}^A} \left(f(\bfx_I)\right)^{2^k} \cdot \bfx_{\minusI}\]\vspace{-.5ex}
	for $k \in \{0,1,\ldots, t-1\}$
	and set 
	\vspace{-.5ex}\[\bfw_2^A = ( H_F \cdot \bfz^{A,(0)}_2, H_F \cdot \bfz^{A,(1)}_2, \ldots, H_F \cdot \bfz^{A,(t-1)}_2).\]\vspace{-.5ex}
\end{enumerate}\vspace{-.5ex}
The information $\bfw^A=(\bfw_1^A, \bfw_2^A)$ is then transmitted to Host $B$. The size of $(\bfw^A_1, \bfw^A_2)$ is given in Claim~\ref{cl:size}. 

We now verify the existence of the maps introduced in earlier in this \namecref{sec:t>1}. {The proofs of the next three claims are presented in the appendices.}

%\tg{[!! Is this paragraph now superseded by Property~\ref{prpr:M}? If so, we should remove it.]} Recall that any two elements that map to the same location by $M$ (in its image) belong to different difference blocks. For notational convenience, let $E_\ell$ denote the set of binary strings of weight at most $\ell$. We require that the map $M$ is \sout{linear} and that it has the following property. For any $\bfe_1, \bfe_2 \in E_\ell$ where $\bfe_1 \neq \bfe_2$, 
%\begin{align}\label{eq:MFunc}
%M(\bfe_1) \neq M(\bfe_2). 
%\end{align}

\begin{claim}\label{cl:mmap} There exists a map $M: \mathbb{F}_2^n \to [N]$ satisfying Property~\ref{prpr:M} with $N = n^\ell$. \end{claim}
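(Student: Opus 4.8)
The plan is to realize $M$ as the syndrome map of a binary linear code whose minimum distance exceeds $\ell$, and then to bound the number of syndromes by $n^\ell$ using the BCH bound, exactly as in the proof of Lemma~\ref{lem:encodecorrect}.

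First I would isolate the only structural fact about blocks that Property~\ref{prpr:M} actually uses: for any difference block $\ball_i$ and any distinct $\bfx_1,\bfx_2\in\ball_i$, condition 3) gives $d_H(\bfx_1,\bfx_2)\le\ell$, so $\bfx_1+\bfx_2$ is a nonzero vector of Hamming weight at most $\ell$. Consequently it suffices to build a map that is injective on every pair of distinct vectors at Hamming distance at most $\ell$; such a map automatically satisfies~\eqref{eq:dist}, and it need not be tailored to the particular blocks.

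Next I would take a binary linear code $\cC$ of length $n$ with $r\times n$ parity-check matrix $H$ and minimum Hamming distance at least $\ell+1$, fix an injection from the set of syndromes $\{H\cdot\bfx:\bfx\in\mathbb{F}_2^n\}$ into $[N]$, and define $M(\bfx)$ to be the index assigned to $H\cdot\bfx$. Correctness is immediate: if $\bfx_1,\bfx_2$ lie in the same block and $\bfx_1\neq\bfx_2$, then $\bfx_1+\bfx_2$ is nonzero of weight at most $\ell$, hence not a codeword of $\cC$ (its weight is strictly below the minimum distance), so $H\cdot\bfx_1\neq H\cdot\bfx_2$ and therefore $M(\bfx_1)\neq M(\bfx_2)$. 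To pin down the value $N=n^\ell$, I would instantiate $\cC$ as a binary BCH code: by the BCH bound invoked in Lemma~\ref{lem:encodecorrect}, a code of the required distance has redundancy $r\le\ell\lceil\log_2 n\rceil$, so the number of distinct syndromes is at most $2^{r}\le n^{\ell}=N$ and the required injection into $[N]$ exists.

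The only real obstacle is the bookkeeping of this counting step: one must verify that a binary code of the needed minimum distance and the stated redundancy genuinely exists, so that the syndrome count does not exceed $N=n^\ell$, which is precisely the regime covered by the BCH bound already used in the paper. A secondary point I would state cleanly is the choice of minimum distance: distance $\ell+1$ is all Property~\ref{prpr:M} requires, but I would in fact take $\cC$ to have minimum distance $2\ell+1$ (the code $\cC_\ell$ of Section~\ref{sec:r1sets}), since then the very same $M$ can be reused in the companion claim establishing Property~\ref{prpr:E}, where one must additionally recover $\bfx_1+\bfx_2$ from the pair of syndromes $H\cdot\bfx_1$ and $H\cdot\bfx_2$ by decoding a weight-$\le\ell$ error.
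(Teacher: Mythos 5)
Your proposal is correct and follows essentially the approach the paper intends: the paper (which defers this proof to an appendix, reusing the BCH-bound machinery of Lemma~\ref{lem:encodecorrect}) likewise realizes $M$ as the syndrome map of the code $\cC_\ell$ with minimum distance $2\ell+1$, whose roughly $\ell\lceil\lg n\rceil$ redundancy bits give at most $n^{\ell}$ syndromes, and whose decodability is exactly what Claim~\ref{cl:emap} then exploits for the map $E$. Your observation that distance $\ell+1$ already suffices for Property~\ref{prpr:M} alone, with the $2\ell+1$ choice made so the same map serves Property~\ref{prpr:E}, is precisely the right reading of why the claim is stated with $N=n^{\ell}$.
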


\begin{claim}\label{cl:emap} There exists a map $E : [N] \times [N] \to \mathbb{F}_2^n$ satisfying Property~\ref{prpr:E}.
\end{claim}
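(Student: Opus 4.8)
The final statement to prove is Claim~\ref{cl:emap}: the existence of a map $E : [N] \times [N] \to \mathbb{F}_2^n$ satisfying Property~\ref{prpr:E}, namely that whenever $\bfx_1, \bfx_2 \in \cB_i$ lie in a common difference block, we have $E(M(\bfx_1), M(\bfx_2)) = \bfx_1 + \bfx_2$. Let me think about what $E$ must do and how to build it from $M$.

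The map $M$ from Claim~\ref{cl:mmap} is constructed with $N = n^\ell$, and by Property~\ref{prpr:M} it separates elements within a block. The key structural fact is that any two elements $\bfx_1, \bfx_2$ in the same block differ in at most $\ell$ coordinates, so $\bfx_1 + \bfx_2$ has Hamming weight at most $\ell$. My plan is to show that $M$ can be taken to be (or refined to be) injective on the set of low-weight vectors relative to a fixed reference, so that the pair $(M(\bfx_1), M(\bfx_2))$ determines $\bfx_1 + \bfx_2$.

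\begin{proof-sketch}
The plan is to exhibit $E$ directly from the structure of $M$. The starting point is that any two elements $\bfx_1,\bfx_2$ belonging to the same block $\cB_i$ satisfy $d_H(\bfx_1,\bfx_2)\le\ell$, so the difference $\bfx_1+\bfx_2$ is a vector of Hamming weight at most $\ell$. First I would recall how $M$ is built in Claim~\ref{cl:mmap}: it assigns to each $\bfx$ a label in $[N]=[n^\ell]$ via the syndrome of $\bfx$ under a parity-check matrix of a code with minimum distance $2\ell+1$ (so that any two vectors within Hamming distance $\ell$ of a common codeword receive distinct labels). The crucial consequence is that $M$ is injective on each Hamming ball of radius $\lfloor\ell/2\rfloor$, and more to the point, the pair of syndromes $(M(\bfx_1),M(\bfx_2))$ determines the syndrome $H\cdot(\bfx_1+\bfx_2)$ of the low-weight difference, since $M$ is linear in its syndrome argument and the field has characteristic two.

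The main step is then to define $E(a,b)$ for a pair of labels $a,b\in[N]$ as the unique minimum-weight vector whose syndrome under the underlying parity-check matrix equals the syndrome difference corresponding to $(a,b)$, and to output the zero vector (or an arbitrary value) when no weight-$\le\ell$ preimage exists. Concretely, if $a=M(\bfx_1)$ and $b=M(\bfx_2)$ with $\bfx_1,\bfx_2$ in a common block, then $\bfx_1+\bfx_2$ has weight at most $\ell$, and because the code has minimum distance $2\ell+1$ there is a unique vector of weight at most $\ell$ with that syndrome; by construction $E(a,b)$ returns exactly this vector, which is $\bfx_1+\bfx_2$. This verifies \eqref{eq:emap}. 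I would note that the value of $E$ on label-pairs that do not arise from a genuine within-block pair is irrelevant, so it may be fixed arbitrarily, which is why $E$ is total on $[N]\times[N]$.

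The part requiring the most care is establishing that the syndrome of the difference $\bfx_1+\bfx_2$ is recoverable from $(M(\bfx_1),M(\bfx_2))$ alone, rather than from $\bfx_1$ and $\bfx_2$ individually. This hinges on $M$ encoding precisely the syndrome information (up to the injective relabeling into $[N]$), so that decoding $a$ and $b$ back to syndromes $\bfs_1,\bfs_2$ and forming $\bfs_1+\bfs_2=H\cdot(\bfx_1+\bfx_2)$ is well defined; then the minimum-distance property of the code guarantees that inverting this syndrome to a weight-$\le\ell$ vector is unambiguous. If the particular construction of $M$ in Claim~\ref{cl:mmap} does not literally expose the syndrome, the fallback is to take a refinement of $M$ that does, at the cost of enlarging $N$ only by a constant factor, which does not affect the asymptotic bounds.
\end{proof-sketch}
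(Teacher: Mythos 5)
Your construction is correct and matches the paper's approach: $M$ is (a relabeling of) the syndrome map of a code with minimum distance $2\ell+1$, and $E(a,b)$ is obtained by syndrome-decoding the sum of the two syndromes to the unique weight-$\le\ell$ vector, exactly mirroring how the paper recovers differences in the $t=1$ case via $\cD_\ell(\alpha^{k_1}+\alpha^{k_i})$. The uniqueness argument from the minimum distance and the remark that $E$ may be arbitrary on label pairs not arising from a common block are precisely what is needed, so the proposal is sound.
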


The next claim follows using similar logic by using a parity check matrix with minimum distance $2t+1$ along with a hash function similar to \cite{FKS84}.

%Since $M$ is linear \tr{(?)}, then from (\ref{eq:MFunc}), the following claim holds. 
%
%\begin{claim}\label{cl:dist} For any $i \in [N]$, and any $\bfx, \bfy \in (\cS^A \bigtriangleup \cS^B)_{M,i}$ where $\bfx \neq \bfy$, $d_H(\bfx, \bfy) \geq 2\ell + 1$. \end{claim}

%Recall that in addition to the map $M$, we will also use the map $f : \gf(2)^{|I|} \to \gf(Q)$ in the first part of the encoding.  The map $f$ has the following properties:
%\begin{align}\label{eq:propFMap}
%\sum_{\bfx \in \cX: \cX \subseteq \gf(2)^{|I|}, |\cX| \leq 2t} f (\bfx) \neq 0,
%\end{align}
%\sout{and for any $\bfx, \bfy \in \gf(2)^{|I|}$,}
%\begin{align}\label{eq:propFMap2}
%\sout{f (\bfx) \neq f(\bfy),}
%\end{align}
%\sout{unless $\bfx_I = \bfy_I$.} 

\begin{claim}\label{cl:fslmap} There exists a map $f : \mathbb{F}_2^{|I|} \to \mathbb{F}_Q$ satisfying Property~\ref{prpr:f} where $Q \geq 2^{t|I|}$ and $Q$ has characteristic two. \end{claim}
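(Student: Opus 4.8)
The plan is to recast Property~\ref{prpr:f} as a statement about the columns of a parity check matrix and then supply a suitable matrix from a binary BCH code of designed distance $2t+1$, exactly in the spirit of the $B_h$-sequence construction in Lemma~\ref{lem:encodecorrect}. Write $u=\log_2 Q$ and identify $\mathbb{F}_Q$ additively with $\mathbb{F}_2^{u}$; since $\mathbb{F}_Q$ has characteristic two, addition in $\mathbb{F}_Q$ agrees with coordinatewise addition in $\mathbb{F}_2^{u}$. Under this identification, specifying an injective map $f:\mathbb{F}_2^{|I|}\to\mathbb{F}_Q$ amounts to choosing $2^{|I|}$ distinct columns of a matrix $H\in\mathbb{F}_2^{u\times 2^{|I|}}$, one for each $\bfx$, and setting $f(\bfx)$ to be the corresponding column. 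The requirement $\sum_{\bfx\in\cX}f(\bfx)\neq 0$ for every nonempty $\cX$ with $|\cX|\le 2t$ then translates precisely into the statement that no nonempty collection of at most $2t$ columns of $H$ sums to zero over $\mathbb{F}_2$, i.e. every $2t$ columns of $H$ are $\mathbb{F}_2$-linearly independent, which is equivalent to $H$ being a parity check matrix of a binary code of minimum Hamming distance at least $2t+1$.

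First I would fix this equivalence and note two automatic consequences: taking $|\cX|=1$ forces every column, hence every value $f(\bfx)$, to be nonzero, so in particular $f(\mathbf 0)\neq 0$; and distinctness of the columns (guaranteed by minimum distance at least $3$) gives injectivity of $f$. It therefore suffices to exhibit a binary code whose parity check matrix has at least $2^{|I|}$ distinct nonzero columns, minimum distance at least $2t+1$, and redundancy of order $t|I|$.

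Next I would supply such a code by the BCH construction already used in Lemma~\ref{lem:encodecorrect}. Take a primitive narrow-sense BCH code of length $2^{m}-1$ with designed distance $2t+1$, choosing $m=|I|+1$ so that the length $2^{|I|+1}-1\ge 2^{|I|}$ provides enough columns. By the BCH bound (see \cite{R06}) its redundancy is at most $t(|I|+1)$, so its parity check matrix $H$ has $u\le t(|I|+1)$ rows, all columns distinct and nonzero, and minimum distance at least $2t+1$. Selecting any $2^{|I|}$ of its columns and indexing them by $\mathbb{F}_2^{|I|}$ through a fixed bijection, which can be realized explicitly as a perfect hash in the style of \cite{FKS84}, defines $f$. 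Interpreting the columns as elements of $\mathbb{F}_Q$ with $Q=2^{u}$, and embedding into a larger characteristic-two field if a specific larger $Q$ is desired (an embedding that preserves sums, since subfield addition is the restriction of the ambient addition), yields a map into a field of characteristic two with $Q\ge 2^{t|I|}$ satisfying Property~\ref{prpr:f}.

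The only real obstacle is the boundary bookkeeping: guaranteeing at least $2^{|I|}$ distinct nonzero columns while keeping the redundancy, and hence $Q$, at the stated order $2^{t|I|}$. This is handled by the slightly generous choice $m=|I|+1$, at the cost of an absorbed additive $t$ in the exponent. The remaining verifications, namely that $\mathbb{F}_2$-independence of any $2t$ columns is exactly Property~\ref{prpr:f} and that this property survives the additive identification $\mathbb{F}_Q\cong\mathbb{F}_2^{u}$, are then routine given the characteristic-two hypothesis.
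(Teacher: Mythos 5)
Your proposal is correct and follows essentially the same route as the paper, which only sketches this claim in one sentence (``a parity check matrix with minimum distance $2t+1$ along with a hash function similar to \cite{FKS84}''): you identify $\mathbb{F}_Q$ additively with $\mathbb{F}_2^u$, note that Property~\ref{prpr:f} is exactly the condition that the assigned columns come from a parity check matrix of minimum distance at least $2t+1$, and instantiate it with a BCH code, with the injective column-indexing playing the role of the hash. The slack of an additive $t$ in the exponent of $Q$ (from taking $m=|I|+1$ to get enough distinct columns) is harmless for the claim as stated and for its use in Claim~\ref{cl:size}.
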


%Once we know the set $\Cen$, we use a secondary map $f^{(2)}_{S,\ell} : \gf(Q_1) \to \gf(Q_2)$. \textcolor{blue}{The purpose of this map is ...} Suppose as before that $S \subseteq \cS^A$, $S' \subseteq S$ where $|S'| \leq 2t$. Then, for any subset $S'' \subseteq S'$:
%\begin{align}\label{eq:propFMap2}
%\sum_{\bfx \in S''} f^{(2)}_{S,\ell}( f_{S,\ell}^{(1)}(\bfx) ) \neq 0.
%\end{align}
%
%\begin{claim}\label{cl:fslmap2} There exists a map $f^{(2)}_{S, \ell}$ where $Q_2$ is equal to the smallest prime greater than $\left(n^{\ell} \cdot s_{A} \right)^{t}$. The map $f^{(2)}_{S,\ell}$ can be uniquely described using $n + \log n$ bits. \end{claim}
%
%We assume that $\gf(Q_2) \subseteq \gf(2^n)$. 

We can apply the previous claim to determine the size of $(\bfw_1^A, \bfw_2^A)$.

\begin{claim}\label{cl:size} Assuming $|I| < \log n$, $\mathbb{F}_Q \subseteq \mathbb{F}_{2^{\bar{n}}}$, $R \leq 2^n$, and $N < 2^{\bar{n}}$, $(\bfw_1, \bfw_2)$ requires approximately
	$$ t^2 n + 2th (\ell + t) \log n $$
	bits of information. \end{claim}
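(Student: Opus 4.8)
The plan is to count the bits in the two transmitted pieces $\bfw_1$ and $\bfw_2$ separately, using the code parameters and field sizes fixed in the preceding claims, and then add the two counts.

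First I would bound the size of $\bfw_2$. Recall $\bfw_2=(\bfw_2^{(0)},\dots,\bfw_2^{(t-1)})$ with $\bfw_2^{(k)}=H_F\cdot\bfz_2^{(k)}$, where $H_F\in\mathbb{F}_R^{t\times N}$ and each entry of $\bfz_2^{(k)}$ is a sum of terms $(f(\bfx_I))^{2^k}\cdot\bfx_{\minusI}$ lying in $\mathbb{F}_{2^{\bar{n}}}$ (using $\mathbb{F}_Q\subseteq\mathbb{F}_{2^{\bar{n}}}$ and the bijection $\mathbb{F}_2^{\bar{n}}\cong\mathbb{F}_{2^{\bar{n}}}$). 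Reading the coefficients of $H_F$ inside $\mathbb{F}_{2^{\bar{n}}}$, each $\bfw_2^{(k)}$ is a length-$t$ vector over $\mathbb{F}_{2^{\bar{n}}}$, i.e.\ $t\bar{n}$ bits, and since there are $t$ indices $k=0,\dots,t-1$, the whole of $\bfw_2$ takes $t\cdot t\bar{n}=t^2\bar{n}$ bits. As $\bar{n}=n-|I|$ and $|I|<\log n$, this is $t^2 n$ up to lower-order terms, which is the first term of the claim.

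Next I would bound the size of $\bfw_1=H_\comp\cdot\bfz_1\in\mathbb{F}_Q^r$, which takes $r\log_2 Q$ bits, where $r$ is the number of rows of $H_\comp$, i.e.\ the redundancy of the $[N,2th+1]_Q$ code $\cC_\comp$. I would realize $\cC_\comp$ as a (possibly extended) Reed--Solomon/MDS code, so that $r=2th$ and such a code exists provided $Q\ge N$. With $N=n^{\ell}$ from \Cref{cl:mmap} this forces $\log_2 Q\ge\ell\log n$, while \Cref{cl:fslmap} requires $Q\ge 2^{t|I|}$ for the invertible map $f$. Taking $Q$ (a power of two inside $\mathbb{F}_{2^{\bar{n}}}$) large enough to meet both requirements and bounding the maximum by the sum gives $\log_2 Q\le \ell\log n+t|I|\le(\ell+t)\log n$, where the final step uses $|I|<\log n$. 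Hence $\bfw_1$ takes at most $r\log_2 Q=2th(\ell+t)\log n$ bits, which is the second term. Adding the two contributions yields $t^2 n+2th(\ell+t)\log n$ bits, as claimed.

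The only genuinely nontrivial step is pinning down the parameters of $\cC_\comp$: one must invoke the existence of an MDS code of length $N$ with minimum distance $2th+1$ to obtain both $r=2th$ and the field-size requirement $Q\ge N$, and then combine this with the independent lower bound $Q\ge 2^{t|I|}$ coming from $f$. I expect the rest to be routine bookkeeping of field-element sizes, together with absorbing the hypotheses $|I|<\log n$, $R\le 2^n$, and $N<2^{\bar{n}}$ into the word ``approximately'' in the statement.
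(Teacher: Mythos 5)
Your count is correct and lands exactly on the paper's numbers, and your treatment of $\bfw_2$ (namely $t$ syndromes, each consisting of $t$ symbols of $\bar n\approx n$ bits, for $t^2n$ bits in total) is essentially the paper's own argument, which simply cites $R\le 2^n$ for the same conclusion. Where you genuinely diverge is in how $\cC_\comp$ is pinned down. The paper keeps the symbol field at the minimum size required by \Cref{cl:fslmap}, namely $Q=2^{t|I|}$, and writes the cost of $\bfw_1$ as $2th(\log N+\log Q)$ bits---in effect a BCH-style redundancy estimate of roughly $2th\lceil\log_Q N\rceil$ symbols of $\log Q$ bits each---so that the $\ell\log n$ term enters through $\log N$ and the $t\log n$ term through $\log Q$. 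You instead enlarge $Q$ until $Q\ge N$, so that an extended Reed--Solomon code of length $N$ and minimum distance $2th+1$ exists with exactly $r=2th$ parity symbols, and the entire factor $(\ell+t)\log n$ enters through the symbol size $\log Q\le\ell\log n+t|I|$. The two instantiations buy different things: the paper's never touches $Q$, so nothing else in the scheme needs re-checking, at the price of a loose formula that conflates symbol count with bit count; yours makes the existence of $\cC_\comp$ and the value of $r$ airtight, but incurs the obligation to verify that inflating $Q$ is harmless elsewhere. It is---Property~\ref{prpr:f} survives composing $f$ with an embedding of the smaller field into a larger field of characteristic two, and $\mathbb{F}_Q\subseteq\mathbb{F}_{2^{\bar n}}$ remains satisfiable alongside $N<2^{\bar n}$---but your write-up should state this check explicitly rather than leave it implicit.
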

\begin{IEEEproof} From the encoding, we have that the information $(\bfw_1^A, \bfw_2^A)$ is transmitted. The vector $\bfw_1^A$ has dimension $r$ which is equal to the dimension of $H_C$, a parity check matrix for an $[N, 2th + 1]_Q$ code. Approximating $r=2th (\log N + \log Q)$, $N=n^\ell$ and $Q = 2^{t |I| }$ gives that $\bfw_1^A$ requires $2th (\ell + t) \log n$ bits of information. Since $R \leq 2^n$, $\bfw_2^A$ requires approximately $t^2 n$ bits of information which gives the statement in the claim.
\end{IEEEproof}
%\tr{Should $\ell$ also appear in the formula? Comparison with our previous result, would be good too.}%\begin{IEEEproof} \textcolor{blue}{Should just be the sum of the parts $\bfw_1, \bfw_2$.}
%\end{IEEEproof}
%\Cref{cl:size}

Recall that if the method from \cite{MT02} were used roughly $thn$ bits of information exchange would be required so that the method described here requires less information exchange when $t \ll h$ and $n$ is large enough.

%The second notation is a map $O : [N] \times [N] \to \mathbb{Z}$ that operates as follows. For a vector $\bfx \in \gf(2)^n$, let $\cB_t(\bfx) = \{ \bfy : \bfy = \bfx + \bfe, \bfe \in E_t \}.$ Then denote
%$$ N(d,t) = | \cB_t(\bfx) \cap \cB_t(\bfy) |, $$
%where $d_H(\bfx, \bfy) = d$. \textcolor{blue}{again need to give the purpose of this map here. } The map $O$ is defined so that
%$$ O(i,j) = N( wt(E(i,j)), \ell). $$
%The map $O(i,j)$ can be thought of as the overlap of the Hamming error spheres for a two vectors whose difference is given by $E(i,j)$. The following lemma is well known, see Levenshtein for instance.
%
%\begin{lemma} For integers $d,t$ where $t \geq d$, 
%$$ N(d,t) = \sum_{i=0}^{t-\lceil \frac{d}{2} \rceil} \nchoosek{n-d}{i} \sum_{k=d+i-t}^{t-i} \nchoosek{d}{k}.$$
%\end{lemma} 
%
%%Let $LI(\gf(2)^n)$ be a set of elements from $\gf(2)^n$ that are linearly independent over $\mathbb{F}_2$. Clearly $|LI(\gf(2)^n)| \geq n$.  For the third stage of the encoding/decoding, we will make use of another perfect hash function $g : f(E_\ell) \to LI(\gf(2)^n)$ whose domain is the range of $f$ under the inputs from the set $E_\ell$. Using the same ideas as before, we have the following claim.

We let the matrix $H_F \in \mathbb{F}_R^{t \times N}$ from (\ref{eq:HFmatrix}) be a parity check matrix of the following form:
\begin{align*}
H_F = \begin{bmatrix}
\gamma_1                   & \gamma_2        			& \cdots & \gamma_N \\
\gamma_1^2               & \gamma_2^2    			& \cdots & \gamma_N^2 \\ 
\gamma_1^4               & \gamma_2^4   			& \cdots & \gamma_N^4 \\
\vdots  		                &                         			& \ddots & \\
\gamma_1^{2^{t-1}}  & \gamma_1^{2^{t-1}}	& \cdots & \gamma_N^{2^{t-1}}
\end{bmatrix},\vspace{-.5ex}
\end{align*}
where any subset of elements from $\{ \gamma_1, \gamma_2, \ldots, \gamma_N \}$ of size $(2^t-1) h$ is linearly independent over $\mathbb{F}_2$. Hence, we have that $\gamma_i \in \mathbb{F}_R$ where $R \geq N^{(2^t-1)h}$ (this statement follows using ideas similar to Claim~\ref{cl:mmap}). Recall $\bar{n} = n - |I|$. We assume $\mathbb{F}_R \subseteq \mathbb{F}_{2^{\bar{n}}}$. For a vector $\bfv\in \mathbb{F}_{2^{\bar{n}}}^N$ and an element $\sigma \in \mathbb{F}_{2^{\bar{n}}}$, let $\sigma(\bfv) \subseteq [N]$ return the set of positions in $\bfv$ that have value $\sigma$. Furthermore, recall that $\rk(\bfv)$ denotes the rank of $\bfv$ over $\mathbb{F}_2$ if $\bfv$ is interpreted as a $\bar{n} \times N$ matrix over $\mathbb{F}_2$. The next lemma can be used to show Property~\ref{prpr:solvesys} holds.

\begin{lemma}\label{lem:gabi} Suppose $\bfx \in \mathbb{F}_{2^{\bar{n}}}^N$ is such that $\rk(\bfx) \leq t$. If, for every non-zero value $\sigma \in  \bfx$, $|\sigma(\bfx)| \leq h$,  then $ H_F \cdot \bfx \neq 0. $
\end{lemma}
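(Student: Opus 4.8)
The plan is to assume $\bfx\neq\mathbf 0$ (the statement is only meaningful in this case, since $H_F\cdot\mathbf 0=\mathbf 0$) and argue by contradiction, supposing $H_F\cdot\bfx=\mathbf 0$. The starting observation is a counting bound that lets me invoke the defining property of $H_F$: since $\rk(\bfx)\le t$, all entries of $\bfx$ lie in an $\mathbb{F}_2$-subspace $V\subseteq\mathbb{F}_{2^{\bar n}}$ of dimension $d=\rk(\bfx)$ with $1\le d\le t$, so $\bfx$ takes at most $2^{t}-1$ distinct nonzero values. As each nonzero value occupies at most $h$ positions by hypothesis, the support of $\bfx$ has size at most $(2^{t}-1)h$, and hence the columns $\gamma_j$ with $x_j\neq 0$ form an $\mathbb{F}_2$-linearly independent set by the construction of $H_F$.

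Next I would collapse $H_F\cdot\bfx$ using Frobenius additivity. Writing the distinct nonzero values as $\sigma_1,\dots,\sigma_m$ and letting $P_p=\sigma_p(\bfx)$ and $\delta_p=\sum_{j\in P_p}\gamma_j$, the fact that $y\mapsto y^2$ is additive in characteristic two gives $\sum_{j\in P_p}\gamma_j^{2^k}=\delta_p^{2^k}$, so the $k$-th coordinate of $H_F\cdot\bfx$ equals $\sum_{p}\sigma_p\,\delta_p^{2^k}$. Because the $P_p$ are disjoint and the $\gamma_j$ over the support are independent, the $\delta_p$ are themselves $\mathbb{F}_2$-linearly independent. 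Expanding each value in a basis $\beta_1,\dots,\beta_d$ of $V$, say $\sigma_p=\sum_i a_{p,i}\beta_i$, and collecting terms (again via Frobenius) rewrites coordinate $k$ as $\sum_{i=1}^{d}\beta_i\,\eta_i^{2^k}$ with $\eta_i=\sum_{p:\,a_{p,i}=1}\delta_p$. The coordinate matrix $A=(a_{p,i})$ has full column rank $d$ since the $\sigma_p$ span $V$; combined with the independence of the $\delta_p$, this forces $\eta_1,\dots,\eta_d$ to be $\mathbb{F}_2$-linearly independent.

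Finally, the assumed equality $H_F\cdot\bfx=\mathbf 0$ reads $\sum_{i=1}^{d}\beta_i\,\eta_i^{2^k}=0$ for all $0\le k\le t-1$; restricting to $k=0,\dots,d-1$ (possible since $d\le t$) gives a square homogeneous system in $\beta_1,\dots,\beta_d$ whose coefficient matrix is the $d\times d$ Moore matrix $(\eta_i^{2^k})$. The classical fact that a Moore determinant is nonzero exactly when its generators are $\mathbb{F}_2$-linearly independent makes this matrix invertible, forcing $\beta_1=\dots=\beta_d=0$ and contradicting that the $\beta_i$ form a basis of the nonzero space $V$. Hence $H_F\cdot\bfx\neq\mathbf 0$.

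I expect the main obstacle to be the bookkeeping that propagates $\mathbb{F}_2$-linear independence along the chain $\gamma_j\to\delta_p\to\eta_i$, each step relying on a disjointness or full-rank fact, together with correctly invoking Moore-matrix nonsingularity at the end; the one genuinely new ingredient is the counting bound $|\mathrm{supp}(\bfx)|\le(2^{t}-1)h$, which is exactly what unlocks the independence of the relevant $\gamma_j$. A caveat worth stating explicitly in the write-up is that the lemma tacitly assumes $\bfx\neq\mathbf 0$.
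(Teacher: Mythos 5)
Your proof is correct, and its first half coincides exactly with the paper's argument: both group the support of $\bfx$ by its distinct nonzero values, use additivity of the Frobenius map in characteristic two to collapse each group into a single element $\delta_p=\sum_{j\in P_p}\gamma_j$, and use the counting bound $|\mathrm{supp}(\bfx)|\le(2^t-1)h$ together with the defining independence property of the columns of $H_F$ to conclude that the $\delta_p$ are nonzero and $\mathbb{F}_2$-linearly independent. The two arguments part ways only at the endgame. The paper observes that $H_F\cdot\bfx=0$ would force $H\cdot\bfx'=0$, where $H$ is the $t\times m$ Moore matrix generated by the independent elements $\delta_p$ and $\bfx'$ is the vector of distinct values, and then concludes by citing Gabidulin's theorem~\cite{Gab85}: any nonzero vector in the kernel of such a matrix has rank at least $t+1$, contradicting $\rk(\bfx')=\rk(\bfx)\le t$. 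You instead re-prove precisely this instance of Gabidulin's result from scratch: expanding the distinct values in a basis $\beta_1,\dots,\beta_d$ of their $\mathbb{F}_2$-span, passing (via Frobenius once more) to the elements $\eta_i=\sum_{p:\,a_{p,i}=1}\delta_p$, deducing their independence from the full column rank of the coordinate matrix $A$, and then invoking nonsingularity of the $d\times d$ Moore matrix $\left(\eta_i^{2^k}\right)$ to force $\beta_1=\dots=\beta_d=0$, a contradiction. Each of these intermediate steps checks out. What your route buys is self-containedness: the only external ingredient is the classical Moore determinant criterion rather than the theory of maximum-rank-distance codes, at the cost of the extra $\gamma_j\to\delta_p\to\eta_i$ bookkeeping you anticipated; the paper's citation makes for a shorter write-up but leaves the rank-distance machinery as a black box. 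Your closing caveat is also well taken and worth keeping: the lemma as stated is literally false for $\bfx=\bf0$ (whose hypothesis holds vacuously), and the paper's proof, like yours, tacitly assumes $\bfx\neq\bf0$; this is harmless in the intended application, where the lemma is applied to the difference of two distinct candidate solutions of $H_F'\cdot\bfv=\bfs$, which is nonzero.
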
 {The proof of the lemma is given in an appendix.}

\subsection{Decoding}\label{ssc:t>1Dec}
In this section, we present the decoding algorithm.
Let $\cD_{\comp}$ be a decoder for the code with the parity check matrix $H_{\comp} \in \mathbb{F}_Q^{r \times N}$ from (\ref{eq:HCr}) so that for any vector $\bfv \in \mathbb{F}_Q^N$ where $\wt(\bfv) \leq th$, $\cD_{\comp}(H_{\comp} \cdot \bfv) = \bfv$. Suppose $\bfw^B = (\bfw^B_1, \bfw^B_2)$ is the result of performing steps 1) and 2) in the encoding section using the set $\cS^B$ (rather than $\cS^A$) where $\bfw^B_2 = ( \bfw^{B,(0)}_{2}, \ldots, \bfw^{B,(t-1)}_{2})$. We will also make use of a map $F : \mathbb{F}_2^{|I|} \times \mathbb{F}_{2^{\bar{n}}} \to \mathbb{F}_2^n$ that outputs a length $n$ binary vector $\bfx$ where $\bfx_I$ is equal to the first argument and $\bfx_{\minusI}$ is equal to the second argument. In the algorithm below, $\widehat{\Cen}$ contains the image of a center set under the map $f$ from (\ref{prpr:f}). For $i \in [N]$, the sets $D_i$ and $G_i$ contain elements in the same difference block. We now detail how to recover $\cS^A \bigtriangleup \cS^B$ given $\bfw^A, \bfw^B$.

\begin{enumerate}
\item Let $\dot{\bfz} = \cD_{\comp}( \bfw_{1}^{A} + \bfw_{1}^{B})=(\dot{z}_{1}, \ldots, \dot{z}_{N}) \in \mathbb{F}_Q^N$.
\item For $i \in [N]$, perform the following procedure to generate the sets $D_1, \ldots, D_N \subseteq \mathbb{F}_Q$:
\begin{enumerate}
\item If $\dot{z}_{i} = 0$, then set $D_i = \emptyset$.
\item Otherwise if $\dot{z}_i = \zeta \in \mathbb{F}_Q$, then set $D_i=\{\zeta_1, \zeta_2, \ldots, \zeta_T \} \subseteq \mathbb{F}_Q$ where $\sum_{j=1}^T \zeta_i = \zeta$, $T \leq t$.
\end{enumerate}
\item Copy $D_1, \ldots, D_N$ to $G_1, \ldots, G_N$ so that $G_1=D_1, \ldots, G_N = D_N$.
\item For $k \in \{0,1,\ldots,t-1\}$, set ${\ddot\bfw}^{(k)} = \bfw^{A,(k)}_{2} + {\bfw^{B,(k)}_{2}}.$
%\item Set $F_1 = D^{(1)}_1$, $F_2 = D^{(1)}_2$, $\ldots$, $F_N = D^{(1)}_N$.
\item From $D_1,\ldots,D_N$ update ${\ddot\bfw}^{(0)},$ $\ldots,$ ${\ddot\bfw}^{(t-1)}$ as follows:
\begin{enumerate}
\item Initialize $i=0$ and $\widehat{\Cen}=\emptyset$.
\item Set $i\leftarrow i+1$. If $i > N$ go to step 6).
\item If $\exists D_j$ where $|D_i \cap D_j| \neq 0$, do the following:
\begin{enumerate}
\item Let $\sigma \in D_i \cap D_j$.
\item For $k \in \{0,1,\ldots,t-1\}$, update ${\ddot\bfw}^{(k)} \leftarrow {\ddot\bfw}^{(k)} + H_F \cdot \bfu^{(k)}$ where $\bfu^{(k)}$ is zero except in position $j$ where $u^{(k)}_j =  E(i,j)_{\minusI} \cdot \sigma^{2^k}$.
\item Remove $\sigma$ from $D_j$. 
\item Add $(\sigma, i)$ to $\widehat{\Cen}$. Repeat step 5c).
\end{enumerate}
\item If $\nexists D_j$ where $|D_i \cap D_j| > 0$, go to step 5b). 
\end{enumerate}
\item For $k \in \{0,1,\ldots, t-1\}$, do the following: from ${\ddot\bfw}^{(k)}$, compute $\ddot{\bfz}^{(k)}$ such that ${\ddot\bfw}^{(k)} = H_F \cdot \ddot{\bfz}^{(k)}$ where the locations of the non-zero entries in $\ddot{\bfz}^{(k)}$ are equal to the locations of the non-zero entries in $\dot{\bfz}$, and $\rk(\ddot{\bfz}^{(k)}) \leq t$.
\item Initialize $\cF = \emptyset$.
\item Add the center set to $\cF$ by setting $i=0$, and doing the following:
\begin{enumerate}
\item Set $i \leftarrow i+1$. If $i > N$, then exit.
\item If $G_i = \emptyset$, then go to step 8a).
\item Suppose $G_i = \{ \sigma_1, \sigma_2, \ldots, \sigma_T \}$ where $T \leq t$. Then let 
\[H_i = \begin{bmatrix} \sigma_1 & \sigma_2 & \ldots & \sigma_T \\ 
\sigma_1^2 & \sigma_2^2 & \ldots & \sigma_T^2 \\
\sigma_1^{2^{t-1}} & \sigma_2^{2^{t-1}} & \ldots & \sigma_T^{2^{t-1}} \\
\end{bmatrix}.\]
\item Define the $t \times 1$ vector $\bfv=(v_1, \ldots, v_t)$ so that $v_k = \ddot{z}_{i}^{(k)}$. Let $V = (V_1, \ldots, V_T)= H_i^{-1} \cdot \bfv \in \mathbb{F}_{2^{\bar{n}}}^T$. 
\item For every $j \in T$ where $V_j \neq 0$, if $(\sigma_j,i) \in \widehat{\Cen}$ add $F(f^{-1}(\sigma_j),V_j)$ to $\cF$. Otherwise, if $(\sigma_j,i) \not \in \widehat{\Cen}$, let $p$ be such that $(\sigma_j, p) \in \widehat{\Cen}$. Add $F(f^{-1}(\sigma_j),V_j) + E(p,i)$ to $\cF$.
\item Go to step 8a).
\end{enumerate}
%\item Augment $\cF$ to include the additional remaining elements in the symmetric difference outside the center set as follows:
%\begin{enumerate}
%\item Initialize $i=0$.
%\item Set $i=i+1$. If $i > N$, the process is complete.
%\item If $\exists G_j$ where $|G_i \cap G_j| \neq 0$, do the following:
%\begin{enumerate}
%\item Let $\sigma \in G_i \cap G_j$.
%\item Suppose $\bfx \in \cF$ where $\bfx_I = f^{-1}(\sigma)$. 
%\item Remove $\sigma$ from $D_j$. Repeat step 5c).
%\end{enumerate}
%\item If $\nexists D_j$ where $|D_i \cap D_j| > 0$, go to step 5b). 
%\end{enumerate}
\end{enumerate}

The following theorem can be proven using the ideas introduced at the beginning of Section~\ref{sec:t>1}. A proof is included in the appendix.

\begin{theorem}\label{th:thlsets} At the end of the decoding, $\cF = \cS^A \bigtriangleup \cS^B$. 
\end{theorem}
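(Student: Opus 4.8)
The plan is to establish correctness of the decoding algorithm by proving, stage by stage, that each step produces exactly the quantities that the worked-out $t=2,h=2$ illustration promises in general. I would first invoke Claim~\ref{cl:introcl}'s analogue for this section: since $H_\comp$ is a parity-check matrix for a $[N,2th+1]_Q$ code and $\wt(\dot\bfz)\le th$ (because $|\cS^A\bigtriangleup\cS^B|\le th$ and $M$ sends distinct elements of a block to distinct positions by Property~\ref{prpr:M}), step~1) recovers $\dot\bfz=\bfz_1^A+\bfz_1^B$ exactly, and by \eqref{eq:zdot} its nonzero entries are $\dot z_j=\sum_{\bfx\in(\cS^A\bigtriangleup\cS^B)_{M,j}}f(\bfx_I)$. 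Then, using the invertibility of $f$ together with Property~\ref{prpr:f} (at most $2t$ summands sum to something nonzero and the decomposition into $\le t$ terms is unique), step~2) correctly splits each $\dot z_i$ into the multiset $\{f(\bfx_I):\bfx\in(\cS^A\bigtriangleup\cS^B)_{M,i}\}$, so the sets $D_i$ (and their copies $G_i$) record exactly which block-labels $f(\bfx_I)$ appear at position $i$.

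Next I would verify the block-identification/differencing logic of Stage~1. The key observation is that two elements $\bfx_1,\bfx_2$ lie in the same block $\ball_i$ iff $(\bfx_1)_I=(\bfx_2)_I$ (condition 4) iff $f((\bfx_1)_I)=f((\bfx_2)_I)$ (invertibility of $f$); thus a common value $\sigma\in D_i\cap D_j$ in step~5c) signals that the elements mapped to positions $i$ and $j$ carrying label $\sigma$ belong to the same block, and by Property~\ref{prpr:E} their difference is exactly $E(i,j)$. I would then track the rank-reduction bookkeeping: I would show by induction over the loop in step~5) that after processing, every block has had all but one of its occurrences (the chosen center, recorded in $\widehat\Cen$) rewritten to match the center. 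Concretely, adding $H_F\cdot\bfu^{(k)}$ with $u_j^{(k)}=E(i,j)_{\minusI}\cdot\sigma^{2^k}$ changes the $j$th coordinate of $\ddot\bfz^{(k)}$ from $\sigma^{2^k}\cdot\bfx_{\minusI}$ to $\sigma^{2^k}\cdot\bfc_{\minusI}$ where $\bfc$ is the block center, exactly mirroring the $\bfu,\bfu'$ updates in the illustration. Since every occurrence within a block collapses to the single center value $\sigma^{2^k}\cdot\bfc_{\minusI}$, the resulting vector $V^{(k)}$ (with $\ddot\bfw^{(k)}=H_F\cdot V^{(k)}$) takes at most $t$ distinct nonzero values, one per block, each repeated at most $h$ times, so $\rk(V^{(k)})\le t$ and each value occurs $\le h$ times.

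At this point I would apply Lemma~\ref{lem:gabi} and Property~\ref{prpr:solvesys}: because the updated $\ddot\bfz^{(k)}$ has rank $\le t$ with each nonzero value appearing $\le h$ times, Lemma~\ref{lem:gabi} guarantees $H_F\cdot(\text{difference of two such vectors})\ne0$, hence step~6) recovers $\ddot\bfz^{(k)}$ uniquely from $\ddot\bfw^{(k)}$ for each $k\in\{0,\dots,t-1\}$. Then step~8) inverts the Vandermonde-type system $H_i\cdot V=\bfv$: for a position $i$ holding block-labels $\{\sigma_1,\dots,\sigma_T\}$, the $t$ equations $\ddot z_i^{(k)}=\sum_{m}\sigma_m^{2^k}\cdot(\cdot)_{\minusI}$ for $k=0,\dots,t-1$ form an invertible system (the $\sigma_m$ are distinct since $f$ is injective and the labels within a position are distinct), yielding each $V_m=\bfc_{\minusI}$. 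Combining $f^{-1}(\sigma_m)=\bfc_I$ with $V_m=\bfc_{\minusI}$ via $F$ reconstructs each block center $\bfc$, and adding back the differences $E(p,i)$ recorded through $\widehat\Cen$ reconstructs every non-center element, so $\cF=\cS^A\bigtriangleup\cS^B$.

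The main obstacle I expect is the rank/multiplicity accounting in step~5) in full generality rather than for the $t=2$ example: I must argue that the greedy ``pick a center and rewrite all other occurrences'' loop simultaneously collapses \emph{every} block to rank-$1$ contributions while respecting the constraint that each distinct value appears at most $h$ times, so that the global rank of $V^{(k)}$ is genuinely $\le t$ and Property~\ref{prpr:solvesys} applies. Care is needed because a single position $i$ may carry labels from up to $t$ different blocks simultaneously (as $j_2$ did in the illustration, belonging to both $\ball_1$ and $\ball_2$), so the loop must disentangle overlapping block memberships; the crux is verifying that processing positions in order, and removing $\sigma$ from $D_j$ after each rewrite, correctly pairs each non-center occurrence with its unique center without double-counting or leaving a block un-collapsed.
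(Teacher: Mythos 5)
Your proposal is correct and follows essentially the same route as the paper's own proof: decode $\dot\bfz$ with $\cD_\comp$, split its entries via Property~\ref{prpr:f}, identify common blocks through condition 4) and recover their internal differences via $E$, collapse every non-center occurrence onto its block center in step 5), invoke Property~\ref{prpr:solvesys} to recover the $\ddot\bfz^{(k)}$, and solve the $H_i$ system to reconstruct centers and then the remaining elements. Two small points to tighten: $H_i$ is a Moore (not Vandermonde) matrix, so its invertibility requires $\mathbb{F}_2$-linear independence of the $\sigma_m$ --- which Property~\ref{prpr:f} supplies, as the paper notes --- rather than mere distinctness; and after step 5) the entries of $V^{(k)}$ lie in the $\mathbb{F}_2$-span of the $t$ center terms (hence $\rk(V^{(k)})\le t$) but, since a single position may carry contributions from several blocks, they can take more than $t$ distinct nonzero values.
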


\section{Conclusion}\label{sec:conclude}

In this work, we studied the problem of synchronizing two sets of data where the size of the symmetric difference is small and the elements in the symmetric difference are related through the Hamming metric. We provided upper and lower bounds on the minimal amount of information exchange required, in a single round of communication, to synchronize these sets. In addition, we provided transmission schemes for certain cases of this problem that require less bits of information exchange than existing algorithms. Future work involves devising improved transmission schemes and, in particular, designing schemes that work for the setup where $t>1$ without any restrictions on the elements on the symmetric difference.

\textbf{Acknowledgment:} This work was funded by the NISE program at SSC Pacific.

\begin{appendices}
\section{Proof of Theorem~\ref{th:asmptbnds}}
For an integer $q$, let $H_{q}\left(x\right)=-x\log_{q}x-(1-x)\log_{q}\left(1-x\right)+x\log_{q}\left(q-1\right).$ Furthermore, let $V_q(n,k)$ denote the size of the Hamming sphere of radius $k$ in $\mathbb{F}_q^n$. We will find the following inequalities and lemma useful. From~\cite{R06}, for $0\le k/n\le1-1/q$, we have
\begin{align}
\frac{1}{n+1}q^{nH_{q}\left(k/n\right)}\le V_{q}\left(n,k\right) & \le q^{nH_{q}\left(k/n\right)}\label{eq:sphere}
\end{align}

\begin{lemma}
	\label{lem:entropy}If $\kappa_{n}=\frac{k}{n}$ is between
	0 and $1-1/q$ and bounded away from both, and $\epsilon_{n}=o(1)$, then 
	\[
	\log_{q}V_{q}\left(n,n(\kappa_{n}+\epsilon_{n})\right)=n\left(H_{q}\left(\kappa_{n}\right)+o\left(1\right)\right).
	\]
\end{lemma}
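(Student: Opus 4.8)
The plan is to sandwich $V_q\bigl(n,n(\kappa_n+\epsilon_n)\bigr)$ between the two bounds in~\eqref{eq:sphere} and then argue that the entropy term is insensitive to the $o(1)$ perturbation $\epsilon_n$. First I would dispose of integrality: set $k_n=\lfloor n(\kappa_n+\epsilon_n)\rfloor$, so that $k_n/n=\kappa_n+\epsilon_n'$ with $\epsilon_n'=\epsilon_n+O(1/n)=o(1)$. By hypothesis there is a fixed $\delta>0$ with $\delta\le\kappa_n\le 1-1/q-\delta$ for all $n$, hence for all large $n$ we have $0\le k_n/n\le 1-1/q$, so~\eqref{eq:sphere} applies with $k=k_n$, and by monotonicity of $V_q$ in the radius it suffices to analyze $V_q(n,k_n)$.

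Taking $\log_q$ of~\eqref{eq:sphere} and dividing by $n$ yields
\[
H_q(\kappa_n+\epsilon_n')-\frac{\log_q(n+1)}{n}\;\le\;\frac1n\log_q V_q(n,k_n)\;\le\;H_q(\kappa_n+\epsilon_n').
\]
The lower correction $\frac{\log_q(n+1)}{n}$ is $o(1)$, so the whole claim reduces to showing that $H_q(\kappa_n+\epsilon_n')=H_q(\kappa_n)+o(1)$. For this I would use local Lipschitz continuity of $H_q$: since $\kappa_n\in[\delta,1-1/q-\delta]$ and $\epsilon_n'\to0$, for all large $n$ both $\kappa_n$ and $\kappa_n+\epsilon_n'$ lie in the closed interval $[\delta/2,\,1-1/q]$, on which $H_q'$ is continuous and therefore bounded by some constant $C=C(\delta,q)$. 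The mean value theorem then gives $|H_q(\kappa_n+\epsilon_n')-H_q(\kappa_n)|\le C|\epsilon_n'|=o(1)$. Substituting this back into the sandwich gives $\frac1n\log_q V_q(n,k_n)=H_q(\kappa_n)+o(1)$, which is exactly $\log_q V_q\bigl(n,n(\kappa_n+\epsilon_n)\bigr)=n\bigl(H_q(\kappa_n)+o(1)\bigr)$.

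I expect the only delicate point to be the entropy step $H_q(\kappa_n+\epsilon_n')=H_q(\kappa_n)+o(1)$. Because $\kappa_n$ itself moves with $n$, one cannot merely invoke continuity at a single fixed point but must use a Lipschitz bound that is \emph{uniform} over the compact subinterval in which all the $\kappa_n$ lie. This is precisely where the hypotheses are used: the requirement that $\kappa_n$ be bounded away from $0$ is what guarantees such a bound, since $H_q'(x)=\log_q\frac{(q-1)(1-x)}{x}\to\infty$ as $x\to0$, while the requirement that $\kappa_n$ be bounded away from $1-1/q$ is what keeps the perturbed radius within the range $k/n\le 1-1/q$ demanded by~\eqref{eq:sphere}. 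Everything else is routine estimation.
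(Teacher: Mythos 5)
Your proof is correct and takes essentially the same route as the paper's: sandwich $V_q\left(n,n(\kappa_n+\epsilon_n)\right)$ using the bounds in (\ref{eq:sphere}) and then absorb the $o(1)$ perturbation into the entropy term. The only difference is one of detail: you explicitly justify the step $H_q(\kappa_n+\epsilon_n)=H_q(\kappa_n)+o(1)$ via a uniform mean-value/Lipschitz bound on the compact subinterval containing all the $\kappa_n$ (and you handle integrality of the radius), whereas the paper asserts this continuity statement without proof.
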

\begin{proof}
	For $0<x_n<1$, where $x_n$ is bounded away from 0 and 1, we have 
	$
	H_{q}\left(x_n+\epsilon_{n}\right)=H_{q}\left(x_n\right)+o(1).
	$	
	Hence,
	\begin{align*}
	\log_{q}V_{q}\left(n,n(\kappa_{n}+\epsilon_{n})\right) & =nH_{q}\left(\kappa_{n}+\epsilon_{n}\right)+o(n)\\
	& =nH_{q}\left(\kappa_{n}\right)+o(n),
	\end{align*}
	where the first equality follows from (\ref{eq:sphere}). 
\end{proof}
The following simple inequalities will also be of use
%\begin{align}
%\binom{n}{k} & =\frac{n\left(n-1\right)\cdots\left(n-k+1\right)}{k\left(k-1\right)\dotsm\left(1\right)}\ge\left(\frac{n}{k}\right)^{k}\label{eq:factLB},\\
%\binom{n}{k} & \le\frac{n^{k}}{k!}\le\left(\frac{ne}{k}\right)^{k},\label{eq:factUB}
%\end{align}
\begin{align}
%\binom{n}{k} & =\frac{n\left(n-1\right)\cdots\left(n-k+1\right)}{k\left(k-1\right)\dotsm\left(1\right)}\ge\left(\frac{n}{k}\right)^{k}\label{eq:factLB},\\
\left(\frac{n}{k}\right)^{k}\le\binom{n}{k} & \le\frac{n^{k}}{k!}\le\left(\frac{ne}{k}\right)^{k},\label{eq:fact}
\end{align}
where the last inequality follows from Stirling's approximation. 
%and from (\cite{MS77}, p.309), it follows that
%\begin{equation}\label{eq:ross-2}
%\binom{n}{t}\ge\frac{1}{n}2^{nH\left(t/n\right)}.
%\end{equation}
%
%\begin{lemma}\label{lem:entropy}
%	For $0\le x\le\frac{1}{2}$, we have $ \lb H\left(x\right)\le H\left(x\right)\le\ub H\left(x\right),$
%	where $\lb H\left(x\right)=x\lg\frac{2}{x}$ and $\ub H\left(x\right)=x\lg\frac{e}{x}$,
%	and where, as a convention, $0\lg0=0$.\end{lemma}
%\begin{proof}
%The lower bound on $H$ is proved by noting that $\frac{d^{2}}{\left(dx\right)^{2}}\left(H\left(x\right)-\lb H\left(x\right)\right)=\frac{-1}{\left(1-x\right)\ln2}<0$
%for $0<x<\frac{1}{2}$ and that $H\left(0\right)-\lb H\left(0\right)=H\left(\frac{1}{2}\right)-\lb H\left(\frac{1}{2}\right)=0$.
%The upper bound follows from the facts that $\frac{d}{dx}\left(\ub H\left(x\right)-H\left(x\right)\right)=\lg\frac{1}{1-x}>0$
%for $0<x<\frac{1}{2}$ and $\ub H\left(0\right)-H\left(0\right)=0$.
%\end{proof}

We now turn to the proof of Theorem~\ref{th:asmptbnds}. Recalling the definitions of $r_1$, $r_2$, and $|\cC|$ from Theorem~\ref{th:bulbH}, we first find bounds for the terms appearing in that theorem. From Lemma \ref{lem:entropy}, we find
\begin{align}
\log_{q}r_{1} & =\log_{q}V_{q}\left(n,\lfloor\ell/2\rfloor\right)=nH_{q}\left(\frac{\lambda}{2}\right)+o\left(n\right),\label{eq:r1}\\
\log_{q}r_{2} & =\log_{q}V_{q}\left(n,\ell\right)=nH_{q}\left(\lambda\right)+o\left(n\right),\label{eq:r2}
\end{align}
{where $\lambda = \frac{\ell}{n}$. }
Using (\ref{eq:fact}) and (\ref{eq:r1}),
\begin{align*}
\log_{q}\binom{r_{1}}{h-1} & \ge\left(h-1\right)\log_{q}\frac{r_{1}}{h-1}\\
& =\left(h-1\right)nH_{q}\left(\frac{\lambda}{2}\right)-\\
&\qquad\left(h-1\right)\log_{q}\left(h-1\right)+o\left(hn\right)\\
& =hnH_{q}\left(\frac{\lambda}{2}\right)-h\log_{q}h+o\left(hn\right).
\end{align*}
Using (\ref{eq:fact}) and (\ref{eq:r2}),
\begin{align*}
\log_{q}\binom{r_{2}}{h} & \le h\log_{q}\frac{r_{2}e}{h}\\
& =h\left(nH_{q}\left(\lambda\right)+o\left(n\right)\right)-h\log_{q}h+O\left(h\right)\\
& =hnH_{q}\left(\lambda\right)-h\log_{q}h+o\left(hn\right).
\end{align*}
Finally, $\log_{q}|\cC|=O\left(n\right)$ and
so 
\begin{equation}
\log_{q}\binom{|\cC|}{t}=O\left(nt\right).\label{eq:c}
\end{equation}

From Theorem~\ref{th:bulbH}, (\ref{eq:r1}), and (\ref{eq:c}), it follows that
\begin{align*}
\log_{q}\chi\left(\mathcal{G}^{2}\right) & \ge\log_{q}\left(\binom{|\cC|}{t}\binom{r_{1}}{h-1}^{t}\right)\\
& \ge tnhH_{q}\left(\frac{\lambda}{2}\right)-th\log_{q}h+o\left(thn\right),
\end{align*}
and so 
$
\frac{\log_{q}\chi\left(\mathcal{G}^{2}\right)}{tnh}\ge H_{q}\left(\frac{\lambda}{2}\right)-\eta+o(1).
$
In our derivation of the upper bound, we need $h<r_{2}/2$. To ensure
this condition, we assume $\eta$ is less than, and bounded away from,
$H_{q}\left(\lambda\right)$. From Theorem~\ref{th:bulbH}, 
$
\chi\left(\mathcal{G}^{2}\right)\le2tq^{2tn}h^{2t}\binom{r_{2}}{h}^{2t}
$
and using (\ref{eq:r1}), and (\ref{eq:c}), 
$
\frac{\log_{q}\chi\left(\mathcal{G}^{2}\right)}{tnh}\le2(H_{q}\left(\lambda\right)-\eta)+o(1).
$

\section{Proof of Lemma~\ref{lem:gabi}}
{For a vector $\bfv$, let $dis(\bfv)$ denote the set of non-zero elements in $\bfv$ (the ``distinct'' elements in $\bfv$).} Suppose $x_{i_1}, x_{i_2}, \ldots, x_{i_{|dis(\bfx)|}}$ are elements in $\bfx$ that have distinct values. Then, we can write:
\begin{align*}
H_F \cdot \bfx &=  \sum_{j=1}^{|dis(\bfx)|} x_{i_j} \cdot \left( \sum_{k \in x_{i_j}(\bfx)} \begin{bmatrix}
\gamma_{i_{k}} \\
\gamma_{i_{k}}^2 \\
\vdots \\
\gamma_{i_{k}}^{2^{t-1}}.
\end{bmatrix} \right) \\
&= \sum_{j=1}^{|dis(\bfx)|} x_{i_j} \begin{bmatrix}
\sum_{k \in x_{i_j}(\bfx)} \gamma_{i_{k}} \\
(\sum_{k \in x_{i_j}(\bfx)} \gamma_{i_{k}})^2 \\
\vdots \\
( \sum_{k \in x_{i_j}(\bfx)} \gamma_{i_{k}})^{2^{t-1}}
\end{bmatrix}.
\end{align*}
Because $\rk(\bfx) \leq t$, clearly $dis(\bfx) \leq 2^t-1$. Since $|x_{i_j}(\bfx)| \leq h$ and any collection of $(2^t-1)h$ elements from $\{ \gamma_1, \ldots, \gamma_N \}$ are linearly independent over $\mathbb{F}_2$, it follows $\sum_{k \in x_{i_j}(\bfx)} \gamma_{i_{k}} \neq 0$. Using similar reasoning we have that the elements 
$$ \Big\{ \sum_{k \in x_{i_1}(\bfx)} \gamma_{i_k}, \sum_{k \in x_{i_2}(\bfx)} \gamma_{i_k}, \ldots, \sum_{k \in x_{i_{|dis(\bfx)|}}(\bfx)} \gamma_{i_k}  \Big\} $$
are also linearly independent over $\mathbb{F}_2$. Let $H$ be the $t \times |dis(\bfx)|$ matrix 
\begin{align*}
H=\begin{bmatrix} 
\sum_{k \in x_{i_1}(\bfx)} \gamma_{i_k} & \cdots & \sum_{k \in x_{i_{|dis(\bfx)|}}(\bfx)} \gamma_{i_k} \\
(\sum_{k \in x_{i_1}(\bfx)} \gamma_{i_k})^2 & \cdots & (\sum_{k \in x_{i_{|dis(\bfx)|}}(\bfx)} \gamma_{i_k})^2 \\
\vdots & \ddots & \vdots \\
(\sum_{k \in x_{i_1}(\bfx)} \gamma_{i_k})^{2^{t-1}} & \cdots & (\sum_{k \in x_{i_{|dis(\bfx)|}}(\bfx)} \gamma_{i_k})^{2^{t-1}}
\end{bmatrix}. 
\end{align*}
Let $\bfx' = (x_{i_1}, x_{i_2}, \ldots, x_{i_{|dis(\bfx)|}})$. Note that if $H_F \cdot \bfx = \bf0$, then $H \cdot \bfx' = \bf0$. where clearly $\rk(\bfx') = \rk(\bfx)$. However, if $H \cdot \bfx' = \bf0$,  $\rk(\bfx) = \rk(\bfx') \geq t+1$ from \cite{Gab85}, which is a contradiction.

\section{Proof of Theorem~\ref{th:thlsets}}
Suppose $\cS^A \bigtriangleup \cS^B = \cB_1 \cup \cB_2 \cup \ldots \cup \cB_T,$ where $T \leq t$. Since $(\cS^A, \cS^B)$ are $(t, h, \ell)$-sets, we can write 
\begin{align*}
\cB_w=\{ \bfx_w, \bfx_w +\bfe_{w,2}, \ldots, \bfx_w + \bfe_{w, T_w}  \},
\end{align*}
where $T_w \leq h$, and $wt(\bfe_{w,m}) \leq \ell$ for $m \in \{2,3,\ldots,T_w\}$. As a result of Properties~\ref{prpr:M} and \ref{prpr:f} for the maps $M$ and $f$ respectively, $D_i$ (for $i \in [N]$) at step 3) of the decoding is such that
$$ D_i = \{ f(\bfx) : M(\bfx) = i \}, $$
and for any $\bfx, \bfy \in \cB_j$, $d_H(\bfx, \bfy) \geq 2\ell+1$.

Suppose that at step 5-c), $\sigma \in D_i \cap D_j$ and that $f(\bfx_I)= f(\bfy_I) = \sigma$ where $M(\bfx) = i$ and $M(\bfy)=j$. Notice that under this setup $\bfx$ and $\bfy$ belong to the same difference block so that $d_H(\bfx,\bfy) \leq \ell$. Suppose further that $\ddot{\bfw}^{(k)} = H_F \cdot \ddot{\bfz}^{(k)}$ at step 5-c-i), so that $\bfx_{\bar{I}}$ appears in $\ddot{\bfz}^{(k)}$ in position $i$ and $\bfy_{\bar{I}}$ appears in $\ddot{\bfz}^{(k)}$ in position $j$. In particular, suppose $A \subseteq \mathbb{F}_{2^{\bar{n}}}$ and that
$$ \ddot{z}^{(k)}_{j} = \sum_{\bfu \in A} \left( f(\bfu_I)\right)^{2^k} \cdot \bfu_{\bar{I}}, $$
where $\bfy \in A$ for $k \in \{0,1,\ldots,t-1\}$. At the completion of step 5-c-iv), $\ddot{\bfz}^{(k)}$ is such that $ \ddot{z}^{(k)}_{j} = \sum_{\bfu \in A'} \left( f(\bfu_I)\right)^{2^k} \cdot \bfu_{\bar{I}}$ where $A'$ is the result starting with $A$ and then replacing the element $\bfy$ with $\bfx$. To see this, notice that since $f(\bfx_I)=f(\bfy_I)$, from condition 4) of $(t,h,\ell)$-sets, $\bfx$ and $\bfy$ belong to the same difference block. Using Property~\ref{prpr:E}, we know that $\bfx + \bfy = E(i,j)$. Thus, at step 5-c-ii), 
$$ \ddot{\bfw}^{(k)} =  H_F \cdot  \left( \ddot{\bfz}^{(k)} + \bfu^{(k)} \right),$$
and so the $j$-th component of $\ddot{\bfz}^{(k)} + \bfu^{(k)}$, which we denote below as $z'_j$, is such that

\begin{align*}
z'_j =&  \sum_{\bfu \in A} \left( f(\bfu_I)\right)^{2^k} \cdot \bfu_{\bar{I}} + E(i,j)_{\bar{I}} \cdot \sigma^{2^k}\\
=& \sum_{\bfu \in A} \left( f(\bfu_I)\right)^{2^k} \cdot \bfu_{\bar{I}} + (\bfx + \bfy)_{\bar{I}} \cdot f(\bfx_I)^{2^k}\\
=& \sum_{\bfu \in A'} \left( f(\bfu_I)\right)^{2^k} \cdot \bfu_{\bar{I}},\\
\end{align*}
as desired. Since this process is repeated at step 5-c) for every $i,j$ where $\exists \sigma \in D_i \cap D_j$, the vector $\ddot{\bfz}^{(k)}$ will contain only contain linear combinations of at most $t$ elements (one element per difference block) from $\cS^A \bigtriangleup \cS^B$ at step 6). In addition, since there are at most $th$ elements in the symmetric difference, the number of non-zero coefficients in $\ddot{\bfz}^{(k)}$ is at most $th$ for $k \in \{0,1,\ldots, t-1\}$ at step 6). Thus, invoking Property~\ref{prpr:solvesys}, we can recover $\{ \ddot{\bfz}^{(0)}, \ddot{\bfz}^{(1)}, \ldots, \ddot{\bfz}^{(t-1)} \}$.

Suppose $C \subseteq \mathbb{F}_{2^{\bar{n}}}$ where, from the discussion in the previous paragraph, $|C| \leq t$. For any $j \in [N]$, at step 8), we can write:
$$ \ddot{z}_j^{(k)} =  \sum_{\bfu \in C} \left( f(\bfu_I)\right)^{2^k} \cdot \bfu_{\bar{I}}. $$
Recall that from step 3) the set $\{ f(\bfu_I) : \bfu \in C \}$ is known. Furthermore, from Property~\ref{prpr:f} of the map $f$, the elements $\{ f(\bfu_I) : \bfu \in C \}$ are linearly independent over $\mathbb{F}_2$ and so the matrix $H_i$ at step {8}c) has full rank. Notice then that at step 8e), we can recover one element from each of the difference blocks from $F(f^{-1}(\sigma_j),V_j)$ and using the function $E$ it is possible to recover the remaining elements in each difference block using ideas similar to those used in Theorem~\ref{th:1sets}.

\end{appendices}


\begin{thebibliography}{10}
\providecommand{\url}[1]{#1}
\csname url@rmstyle\endcsname
\providecommand{\newblock}{\relax}
\providecommand{\bibinfo}[2]{#2}
\providecommand\BIBentrySTDinterwordspacing{\spaceskip=0pt\relax}
\providecommand\BIBentryALTinterwordstretchfactor{4}
\providecommand\BIBentryALTinterwordspacing{\spaceskip=\fontdimen2\font plus
\BIBentryALTinterwordstretchfactor\fontdimen3\font minus
  \fontdimen4\font\relax}
\providecommand\BIBforeignlanguage[2]{{%
\expandafter\ifx\csname l@#1\endcsname\relax
\typeout{** WARNING: IEEEtran.bst: No hyphenation pattern has been}%
\typeout{** loaded for the language `#1'. Using the pattern for}%
\typeout{** the default language instead.}%
\else
\language=\csname l@#1\endcsname
\fi
#2}}


\bibitem{MS77}
{F.J. MacWilliams and N.J.A. Sloane,} \textit{The theory of error-correcting codes}, North Holland Publishing Company, 1977.

\bibitem{FKS84}
{M. Fredman, M. Komlos, E. Szemeredi},'' Storing a sparse table with $O(1)$ worst case access time,'' \textit{Journal of ACM}. vol. 31, no. 3, pp. 538-544, 1984. 

\bibitem{Gab85}
{\`E.~M.~Gabidulin}, ``Theory of codes with maximum rank distance,'' \textit{Probl. Peredachi Inf.} vol. 21, no. 1, pp. 3-16, 1985.

\bibitem{L90}
{R. J. Lipton}, ``Efficient checking of computations,'' \textit{STACS}, 1990.

\bibitem{W01}
{D. B. West}, \textit{Introduction to graph theory}, Prentice Hall Upper Saddle River, 2001, vol. 2.

\bibitem{MT02}
{Y. Minsky and A. Trachtenberg}, ``Practical set reconciliation,''  \textit{Tech. Rep.}, Department of Electrical and Computer Engineering, Boston University, 2002.

\bibitem{KLT03}
{M. Karpovsky, L. Levitin, and A. Trachtenberg}, ``Data verification and reconciliation with generalized error-control codes,'' \textit{IEEE Trans. Info. Theory}, vol. 49, no. 7, pp. 1788-1793, July 2003.

\bibitem{MTZ03} 
{Y. Minsky, A. Trachtenberg, R. Zippel}, ``Set reconciliation with nearly optimal communication complexity,'' \textit{IEEE Trans. Inform. Theory}, vol. 49, no. 9, pp. 2213-2218, Sept. 2003.

\bibitem{R06}
{R. Roth}, \textit{Introduction to coding theory}, Cambridge University Press, 2006.

\bibitem{KBE11}
{T. Klove, B. Bose, and N. Elarief}, ``Systematic, single limited magnitude error correcting codes for flash memories,'' \textit{IEEE Trans. Info. Theory}, July 2011.

\bibitem{EGUV11}
{D. Eppstein, M. Goodrich, F. Uyeda, G. Varghese}, ``What's the difference? Efficient set reconciliation without prior context,'' \textit{SIGCOMM} 2011.

\bibitem{GM11}
{M. T. Goodrich and M. Mitzenmacher}, ``Invertible bloom lookup tables,'' ArXiv e-prints, 2011.

\bibitem{GL13}
{D. Guo and M. Li}, ``Set reconciliation via counting bloom filters,'' \textit{IEEE Trans. Knowledge and Data Eng.}, 2013.

\bibitem{SR14}
{V. Skachek, M. Rabbat}, ``Subspace synchronization: a network-coding approach to object reconciliation,'' \textit{IEEE International Symposium on Information Theory (ISIT)}, 2014.

\bibitem{conf1}
{R. Gabrys and F. Farnoud}, ``Reconciling similar sets of data,'' in 2015 \textit{IEEE International Symposium on Information Theory (ISIT)}, 2015, pp. 2837--2841.

\bibitem{conf2}
{R. Gabrys and F. Farnoud}, ``Reconciling similar sets of data,'' in \textit{55th Annual Allerton Conference on Communication, Control, and Computing}, Monticello, IL, 2017.

%\bibitem{L01}
%{V.I. Levenshtein}, ``Efficient reconstruction of sequences from their subsequences or super sequences,'' \textit{Journal of Comb. Theory}, Vol. 93, No. 2, pp. 310-332, 2001.

%\bibitem{L02}
%{V.I. Levenshtein}, ``Bounds for deletion/insertion correcting codes,'' \textit{ISIT}, 2002.


%\bibitem{CKYYZ14} {D. Chen, C. Konrad, K. Yi, W. Yu, Q. Zhang}, ``Robust set reconciliation,'' \textit{SIGMOD}, 2014.

%\bibitem[SK83]{SK83}
%{D. Sankoff and J. B. Kruskal, Eds.}, \textit{Time warps, string edits, and macromolecules: the theory and practice of sequence comparison}. Addison-Wesley Pub. Co., Advanced Book Program, 1983.

%\bibitem{S}
%{N.J.A. Sloane}, ``On single deletion correcting codes,'' available online at http://www.research.att.com/\textasciitilde njas.


\end{thebibliography}
\end{document}